\documentclass[onecolumn, a4size, 12pt]{IEEEtran}
\usepackage{amsmath}
\usepackage{amssymb}
\usepackage{amsfonts}
\usepackage{graphicx}
\usepackage{epsfig}
\usepackage{subfigure}
\usepackage{psfrag}
\usepackage{xcolor}
\usepackage{amsfonts, bm}

\linespread{1.33}

\title{Joint Transmit Beamforming and Receive Power Splitting for MISO SWIPT Systems}
\author{Qingjiang Shi, Liang Liu, Weiqiang Xu, and Rui Zhang
\thanks{Q. Shi is with the School of Info. Sci. \& Tech., Zhejiang Sci-Tech University, Hangzhou 310018, China. He is also with the State Key Laboratory of Integrated Services Networks, Xidian University (e-mail:qing.j.shi@gmail.com).}
\thanks {L. Liu is with the
Department of Electrical and Computer Engineering, National
University of Singapore (e-mail:liu\_liang@nus.edu.sg).}
\thanks{W. Xu is with the School of Info. Sci. \& Tech. Zhejiang Sci-Tech University, Hangzhou 310018, China (e-mail:wqxu@zstu.edu.cn).}
\thanks{R.
Zhang is with the Department of Electrical and Computer Engineering,
National University of Singapore (e-mail:elezhang@nus.edu.sg). He is
also with the Institute for Infocomm Research, A*STAR, Singapore.}
}

\setlength{\textwidth}{7.1in} \setlength{\textheight}{9.7in}
\setlength{\topmargin}{-0.8in} \setlength{\oddsidemargin}{-0.30in}

\begin{document}
\maketitle 

\begin{abstract}
This paper studies a multi-user multiple-input single-output (MISO) downlink system for simultaneous
wireless information and power transfer (SWIPT), in which a set of single-antenna
mobile stations (MSs) receive information and energy simultaneously via power splitting (PS) from the
signal sent by a multi-antenna base station (BS). We aim to minimize the total transmission power at BS
by jointly designing transmit beamforming vectors and receive PS ratios for all MSs under their given signal-to-interference-plus-noise ratio (SINR) constraints for information decoding and harvested power constraints
for energy harvesting. First, we derive the sufficient and necessary condition for the feasibility
of our formulated problem. Next, we solve this
non-convex problem by applying the technique of semidefinite relaxation (SDR). We prove that SDR
is indeed tight for our problem and thus achieves its global optimum. Finally, we propose
two suboptimal solutions of lower complexity than the optimal solution based on the principle of separating the optimization of transmit
beamforming and receive PS, where the zero-forcing (ZF) and the SINR-optimal based transmit beamforming schemes are applied, respectively.
\end{abstract}

\begin{keywords}
Simultaneous wireless information and power transfer (SWIPT), broadcast channel, energy harvesting,
beamforming, power splitting, semidefinite relaxation.
\end{keywords}

\setlength{\baselineskip}{1.5\baselineskip}

\newtheorem{definition}{\underline{Definition}}[section]
\newtheorem{fact}{Fact}
\newtheorem{assumption}{Assumption}
\newtheorem{theorem}{\underline{Theorem}}[section]
\newtheorem{lemma}{\underline{Lemma}}[section]
\newtheorem{corollary}{\underline{Corollary}}[section]
\newtheorem{proposition}{\underline{Proposition}}[section]
\newtheorem{example}{\underline{Example}}[section]
\newtheorem{remark}{\underline{Remark}}[section]
\newtheorem{algorithm}{\underline{Algorithm}}[section]
\newcommand{\mv}[1]{\mbox{\boldmath{$ #1 $}}}

\newcommand{\SINR}{\textrm{SINR}}
\newcommand{\trace}{{\rm Tr}}
\newcommand{\rank}{{\rm Rank}}
\newcommand{\diag}{{\rm diag}}
\newcommand{\st}{{\rm s.t.}}
\newcommand{\adj}{\rm{\textbf{adj}}}
\newcommand{\bI}{\mathbf{I}}
\newcommand{\bH}{\mathbf{H}}
\newcommand{\bP}{\mathbf{P}}
\newcommand{\bG}{\mathbf{G}}
\newcommand{\bT}{\mathbf{T}}
\newcommand{\bF}{\mathbf{F}}
\newcommand{\bQ}{\mathbf{Q}}
\newcommand{\hQ}{\hat{\bQ}}

\newcommand{\bA}{\mathbf{A}}
\newcommand{\bB}{\mathbf{B}}
\newcommand{\bC}{\mathbf{C}}
\newcommand{\bD}{\mathbf{D}}
\newcommand{\bE}{\mathbf{E}}
\newcommand{\bU}{\mathbf{U}}
\newcommand{\bV}{\mathbf{V}}

\newcommand{\bX}{\mathbf{X}}
\newcommand{\cX}{\mathcal{X}}
\newcommand{\bY}{\mathbf{Y}}
\newcommand{\cY}{\mathcal{Y}}
\newcommand{\bW}{\mathbf{W}}
\newcommand{\hY}{\hat{\bY}}
\newcommand{\barY}{\bar{\bY}}
\newcommand{\bS}{\mathbf{S}}

\newcommand{\bZ}{\mathbf{Z}}
\newcommand{\bOmega}{\mathbf{\Omega}}
\newcommand{\bLambda}{\mathbf{\Lambda}}
\newcommand{\bSigma}{\mathbf{\Sigma}}
\newcommand{\bPhi}{\mathbf{\Phi}}
\newcommand{\bTheta}{\mathbf{\Theta}}
\newcommand{\bK}{\mathbf{K}}

\newcommand{\bt}{\bm{t}}
\newcommand{\bx}{\bm{x}}
\newcommand{\barx}{\bar{\bm{x}}}
\newcommand{\barv}{\bar{\bm{v}}}
\newcommand{\by}{\bm{y}}
\newcommand{\bh}{\bm{h}}
\newcommand{\bp}{\bm{p}}
\newcommand{\hh}{\hat{\bm{h}}}
\newcommand{\hH}{\hat{\mathbf{H}}}
\newcommand{\hX}{\hat{\mathbf{X}}}

\newcommand{\bz}{\bm{z}}
\newcommand{\bb}{\bm{b}}
\newcommand{\bc}{\bm{c}}
\newcommand{\bu}{\bm{u}}
\newcommand{\bv}{\bm{v}}
\newcommand{\bn}{\bm{n}}
\newcommand{\br}{\bm{r}}
\newcommand{\bs}{\bm{s}}
\newcommand{\bw}{\bm{w}}
\newcommand{\Cdom}{\mathbb{C}}
\newcommand{\Rdom}{\mathbb{R}}
\newcommand{\expect}{\mathbb{E}}
\newcommand{\rgauss}{\mathcal{N}}
\newcommand{\cgauss}{\mathcal{CN}}
\newcommand{\mK}{\mathcal{K}}
\newcommand{\mL}{\mathbb{L}}

\section{Introduction}

Recently, simultaneous wireless information and power transfer (SWIPT) has drawn an upsurge of interests \cite{ZhangArXiv}-\cite{Clerckx13}. By SWIPT, mobile users are provided with both wireless data and energy accesses at the same time, which brings great convenience. However, there is one crucial issue for realizing SWIPT systems in practice, i.e., existing receiver circuits cannot decode information and harvest energy from the same received signal independently \textcolor{blue}{\cite{SunH2012}}. As a result, the receiver architecture design plays a significant role in determining the trade-offs between the end-to-end information versus energy transfer. Two practical receiver designs have been proposed for SWIPT, namely, time switching (TS) and power splitting (PS) \cite{ZhangArXiv}. With TS, the receiver switches over time between decoding information and harvesting energy, while with PS, the receiver splits the received signal into two streams of different power for decoding information and harvesting energy separately. Based on the PS scheme, a novel integrated receiver architecture for SWIPT was proposed in \cite{Rui12}, where the circuit for radio frequency (RF) to baseband conversion in the conventional information receiver is integrated to the front end of energy receiver via a rectifier, thus achieving a small form factor as well as energy saving. The TS and PS schemes have also been investigated for SWIPT over fading channels to exploit opportunistic information and energy transmissions \cite{Liu2013}, \cite{Rui13TCOM}. It is worth noting that theoretically, TS can be regarded as a special form of PS with only binary split power ratios, and thus in general PS achieves better rate-energy transmission trade-offs than TS \cite{ZhangArXiv}-\cite{Rui13TCOM}. However, in practice PS is implemented differently from TS since the former requires an RF signal splitter \cite{Wu} while the latter only needs a simpler switcher.

Another key concern for SWIPT is drastically decaying power transfer efficiency with the increasing transmission distance due to propogation pass loss. To tackle this problem, MIMO (multiple-input multiple-output) techniques by employing multiple antennas at the transmitter and/or receiver have been proposed in \cite{ZhangArXiv} to significantly improve the power transfer efficiency while still achieving high spectral efficiency for information transmission. Moreover, \cite{Xiang2012} extended \cite{ZhangArXiv} to the case with imperfect channel state information (CSI) at the transmitter. In \cite{RuiWCNC}, a MISO (multiple-input single-output) multicast SWIPT system with no CSI at the transmitter was studied, where random beamforming was proposed to improve the performance of opportunistic energy and information multicasting over quasi-static fading channels. Since in practice information and energy receivers have very different power sensitivity (e.g., $-60$dBm for the information receiver versus $-10$dBm for the energy receiver) \cite{ZhangArXiv}, \cite{Liu2013}, a ``near-far'' or receiver-location based scheduling for a MISO SWIPT system was proposed in \cite{Rui13}, where receivers that are close to the transmitter are scheduled for energy transmission, while others that are more distant away from the transmitter are scheduled for information transmission, to resolve the receiver \textcolor{blue}{sensitivity} issue. In \cite{RuiGlobecom}, the receiver-location based scheduling is extended to a MISO SWIPT system with the additional secrecy information transmission constraint. Furthermore, multi-antenna SWIPT systems have also been recently studied under the interference channel (IC) setup. A two-user MIMO IC was studied in \cite{Clerckx13} for SWIPT, where the achievable information and energy transmission trade-offs by different combinations of transmission modes at the two transmitters are characterized. The SWIPT system was also studied in the $K$-user MISO IC in \cite{Ottersten2013} and \cite{Shen12} based on PS and TS receivers, respectively.

\begin{figure}
\centering
\includegraphics[width=3.5in]{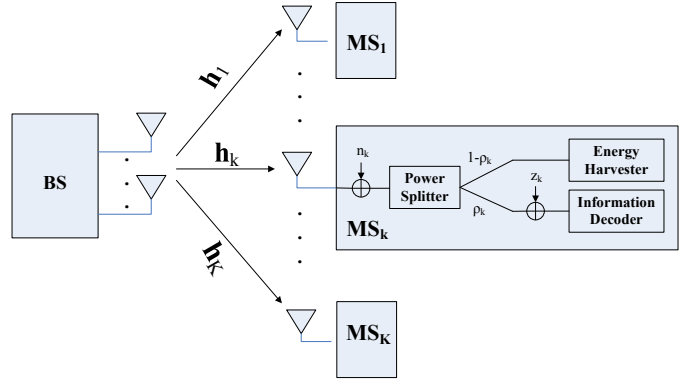}
\caption{A multi-user MISO SWIPT system, where each mobile station (MS) coordinates information decoding and energy harvesting via power splitting (PS).}
\label{fig:fig1}
\end{figure}

In this paper, we further study the multi-antenna and PS enabled SWIPT system by considering a MISO broadcast channel consisting of one multi-antenna base station (BS) and a set of $K\geq 1$ single-antenna mobile stations (MSs), as shown in Fig. \ref{fig:fig1}. We focus our study on PS receivers instead of TS receivers in \cite{Rui13}, \cite{RuiGlobecom}, such that each MS can receive both information and energy from the BS continuously at all time. Each MS is assumed to have its own required signal-to-interference-plus-noise ratio (SINR) for information decoding as well as harvested power amount for energy harvesting. Under the above two types of constraints at the same time, we study the joint design of transmit beamforming at BS and receive PS ratios at MSs to minimize the total transmission power. First, we derive the sufficient and necessary condition for the
feasibility of our formulated problem. Interestingly, it is shown that the feasibility of this problem only depends on the SINR constraints but not on the harvested power constraints. Next, we apply the technique of SDR \cite{Luo2010} to solve this non-convex problem, due to the coupled design variables of both beamforming vectors and PS ratios. We prove that SDR is indeed tight for our problem and thus it yields optimal beamforming  solution. Furthermore, we present two suboptimal designs of lower complexity, in which the transmit beamforming vectors are first designed based on the zero-forcing (ZF) and the SINR-optimal criteria, respectively, and then the receive PS ratios are optimized to minimize the transmission power. Finally, we compare the performance of our proposed optimal and suboptimal solutions by simulations.

The remainder of this paper is organized as follows.
Section II presents the MISO SWIPT system model and the formulation of our joint beamforming and PS design problem. Section III provides the feasibility condition for the formulated problem. Section IV presents the optimal
solution based on SDR and proves its optimality. Section V presents
two suboptimal solutions based on ZF and SINR-optimal beamforming, respectively. Section VI
provides numerical results for the performance comparison. Finally, Section VII concludes the paper.

\emph{Notations}: scalars are denoted by lower-case letters; bold-face lower-case letters are used for vectors,
while bold-face upper-case letters are for matrices. For a square matrix $\bA$, $\trace(\bA)$, $\rank(\bA)$, $\bA^T$ and $\bA^H$ denote its trace, rank, transpose and conjugate transpose, respectively, while $\bA\succeq0$ means that $\bA$ is a positive semidefinite matrix. $\bI_n$ denotes an $n$ by $n$ identity matrix. $||\cdot||$ denotes the Euclidean norm of a complex vector, while $|\cdot|$ denotes the absolute value of a complex scalar. The distribution of a circularly symmetric complex Gaussian (CSCG) random vector with mean $\bm{\mu}$ and covariance matrix $\bC$ is denoted by $\cgauss(\bm{\mu},\bC)$, and `$\sim$' stands for `distributed as'. Finally, $\Cdom^{m\times n}$ denotes the space of $m\times n$ complex matrices.

\section{System Model and Problem formulation}

This paper considers a multi-user MISO downlink SWIPT system consisting of one BS and $K$ MSs, denoted by ${\rm MS}_1,\cdots,{\rm MS}_K$, respectively, over a given frequency band, as shown in Fig. \ref{fig:fig1}. It is assumed that the BS is equipped with
$N_t>1$ antennas and each MS equipped with one antenna. We assume linear transmit precoding at BS, where each MS is assigned with one dedicated information beam. The complex baseband transmitted signal at BS is thus expressed as
\begin{equation}
\bx=\sum_{k=1}^K \bv_ks_k,
\end{equation}where $s_k$ denotes the transmitted data symbol for ${\rm MS}_k$, and $\bv_k$ is the corresponding transmit beamforming vector. It is assumed that $s_k$, $k=1,\cdots,K$, are independent and identically distributed (i.i.d.) CSCG random variables with zero mean and unit variance, denoted by $s_k\sim \mathcal{CN}(0,1)$.


We assume the quasi-static flat-fading channel for all MSs and for convenience denote $\bh_k$ as the conjugated complex channel vector from BS to ${\rm MS}_k$. The received signal at ${\rm MS}_k$ is then given by
\begin{equation}
y_k = \bh_k^H \sum_{j=1}^K \bv_js_j+n_k, ~~ k=1,\cdots,K,
\end{equation}
where $n_k\sim \cgauss(0, \sigma_k^2)$ denotes the antenna noise at the receiver of ${\rm MS}_k$.

In this paper, we assume each MS applies PS to coordinate the processes of information decoding and energy harvesting from the received signal \cite{ZhangArXiv}. Specifically, as shown in Fig. \ref{fig:fig1}, the received signal at each MS is split to the information decoder (ID) and the energy harvester (EH) by a power spitter, which divides an $\rho_k$ ($0\leq \rho_k \leq 1$) portion of the signal power to the ID, and the remaining $1-\rho_k$ portion of power to the EH. As a result, the signal split to the ID of ${\rm MS}_k$ is expressed as
\begin{equation}
y_k^{{\rm ID}} = \sqrt{\rho_k}\left(\bh_k^H\sum_{j=1}^K\bv_js_j+n_k\right)+z_k, ~~ k=1,\cdots,K,
\end{equation}
where $z_k\sim \cgauss(0, \delta_k^2)$ is the additional noise introduced by the ID at ${\rm MS}_k$. Accordingly, the SINR at the ID of ${\rm MS}_k$ is given by\begin{equation}
\SINR_k=\frac{\rho_k|\bh_k^H\bv_k|^2}{\rho_k\sum_{j\neq k}|\bh_k^H\bm{v}_j|^2+\rho_k\sigma_k^2+\delta_k^2}, ~~ k=1,\cdots,K.
\end{equation}

On the other hand, the signal split to the EH of ${\rm MS}_k$ is expressed as
\begin{equation}
y_k^{{\rm EH}} = \sqrt{1-\rho_k}\left(\bh_k^H\sum_{j=1}^K \bv_js_j+n_k\right), ~~ k=1,\cdots,K.
\end{equation}Then, the harvested power by the EH of ${\rm MS}_k$ is given by
\begin{equation}
E_k = \zeta_k (1-\rho_k)\left(\sum_{j=1}^K |\bh_k^H\bv_j|^2+\sigma_k^2\right), ~~ k=1,\cdots,K,
\end{equation}
where $\zeta_k\in(0~1]$ denotes the energy conversion efficiency at the EH of ${\rm MS}_k$.

In order to realize a continuous information transfer, each ${\rm MS}_k$ requires its SINR to be above a given target, denoted by $\gamma_k$, at all time. In the meanwhile, each ${\rm MS}_k$ also requires that its harvested power needs to be no smaller than a given threshold, denoted by $e_k$, to maintain its receiver operation. Under the above two types of constraints, we aim to minimize the total transmission power at BS by jointly designing transmit beamforming vectors, $\{\bv_k\}$, and receive PS ratios, $\{\rho_k\}$, at all MSs, i.e.,

\begin{equation}
\label{eq:P1}
\begin{split}
\min_{\{\bv_k, \rho_k\}}  & ~~ \sum_{k=1}^K ||\bv_k||^2\\
\st & ~~ \frac{\rho_k|\bh_k^H\bv_k|^2}{\sum_{j\neq k}\rho_k|\bm{h}_k^H\bv_j|^2+\rho_k\sigma_k^2+\delta_k^2}{\geq} \gamma_k, ~~ \forall k,\\
&~~\zeta_k(1-\rho_k)\left(\sum_{j=1}^K |\bh_k^H\bv_j|^2+\sigma_k^2\right)\geq e_k, ~~ \forall k, \\
&~~0< \rho_k< 1, ~~ \forall k.
\end{split}
\end{equation}Notice that in this paper we consider the general case that all MSs have non-zero SINR and harvested power targets, i.e., $\gamma_k>0$ and $e_k>0$, $\forall k$; thus, the receive PS ratios at all MSs should satisfy $0<\rho_k<1$, $\forall k$, as given by the last constraint in \eqref{eq:P1}.

For convenience, problem \eqref{eq:P1} is referred to as the joint beamforming and power splitting (JBPS) problem in the sequel. Note that JBPS is non-convex due to \textcolor{blue}{not only the coupled beamforming vectors $\{\bv_k\}$ and PS ratios $\{\rho_k\}$ in both the SINR and harvested power constraints but also all the quadratic terms involving $\{\bv_k\}$}. It is also worth noting that if we fix $\rho_k$'s with $0<\rho_k<1$, $\forall k$, the resulting beamforming optimization problem over $\{\bv_k\}$ is still non-convex due to the harvested power constraints with $e_k>0$, $\forall k$. Finally, notice that if we remove all the harvested power constraints and let $\rho_k\rightarrow 1$, $\forall k$, the above problem reduces to the conventional power minimization problem subject to only SINR constraints in the MISO broadcast channel, which can be efficiently solved by existing methods \cite{Ottersten2001,Codreanu07,Boche2004}. In the following, we first derive the sufficient and necessary condition for the feasibility of the JBPS problem in \eqref{eq:P1}, and then propose both optimal and suboptimal solutions to this problem. \textcolor{blue}{Note that, for practical implementation of all solutions, the computation takes place at the BS and then the BS sends each $\rho_k$ to the corresponding MS}.


\section{When is the JBPS Problem Feasible?}\label{sec2}
Before we proceed to solve the JBPS problem in \eqref{eq:P1}, in this section we study its feasibility condition for a given set of $\gamma_k>0$ and $e_k>0$, $\forall k$. First, we have the following lemma.

\begin{lemma}\label{lem:no EH}
Problem \eqref{eq:P1} is feasible if and only if the following problem is feasible.
\begin{equation}
\label{eq:no EH}
\begin{split}
\textrm{find} &~~~~\{\bv_k, \rho_k\} \\
\st & ~~~ \frac{\rho_k|\bh_k^H\bv_k|^2}{\sum_{j\neq k}\rho_k|\bm{h}_k^H\bv_j|^2+\rho_k\sigma_k^2+\delta_k^2}{\geq} \gamma_k, ~~ \forall k, \\
& ~~~ 0< \rho_k< 1, ~~ \forall k.
\end{split}
\end{equation}
\end{lemma}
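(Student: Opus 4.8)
The plan is to establish the two implications separately. The forward direction --- feasibility of \eqref{eq:P1} implies feasibility of \eqref{eq:no EH} --- is immediate, since \eqref{eq:no EH} is just \eqref{eq:P1} with the harvested power constraints (and the objective) removed, so any point feasible for \eqref{eq:P1} is feasible for \eqref{eq:no EH}.

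For the reverse direction I would start from a point $\{\bv_k,\rho_k\}_{k=1}^K$ feasible for \eqref{eq:no EH}, keep the PS ratios $\rho_k$ fixed (they already lie in $(0,1)$), and rescale all beamformers by a common factor $t\geq 1$, i.e., set $\tilde{\bv}_k=t\bv_k$. The first thing to verify is that this does not destroy the SINR constraints: under the scaled beamformers the SINR of ${\rm MS}_k$ equals
\begin{equation}
\frac{t^2\rho_k|\bh_k^H\bv_k|^2}{t^2\sum_{j\neq k}\rho_k|\bh_k^H\bv_j|^2+\rho_k\sigma_k^2+\delta_k^2},
\end{equation}
which, as a function of $u=t^2$, has the form $au/(bu+c)$ with $a,b\geq 0$ and $c=\rho_k\sigma_k^2+\delta_k^2>0$, hence is nondecreasing in $t$; so, being at least $\gamma_k$ at $t=1$, it stays at least $\gamma_k$ for every $t\geq 1$. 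The second thing to note is that the SINR constraint at $t=1$ forces its numerator to be positive, whence $|\bh_k^H\bv_k|^2>0$ and therefore $\sum_{j=1}^K|\bh_k^H\bv_j|^2>0$ for each $k$.

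With these two facts in hand, the harvested power of ${\rm MS}_k$ under $\{t\bv_k\}$ is $\zeta_k(1-\rho_k)\big(t^2\sum_{j=1}^K|\bh_k^H\bv_j|^2+\sigma_k^2\big)$, which diverges to $+\infty$ as $t\to\infty$, since $1-\rho_k>0$ and the sum of channel gains is strictly positive. Hence each harvested power constraint is met once $t$ is sufficiently large, and choosing $t$ to be the maximum over $k$ of the resulting thresholds makes all $K$ of them hold simultaneously, while the SINR constraints and the bounds $0<\rho_k<1$ are preserved. Then $\{t\bv_k,\rho_k\}$ is feasible for \eqref{eq:P1}.

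I expect the only delicate point to be excluding the degenerate case $\sum_{j}|\bh_k^H\bv_j|^2=0$, in which scaling the transmit power would be useless and the divergence argument would fail; this is precisely where the SINR requirement $\gamma_k>0$ enters, and it is worth spelling out. Everything else is routine monotonicity and a finite maximum.
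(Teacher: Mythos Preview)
Your proposal is correct and follows essentially the same approach as the paper: the forward implication is immediate because \eqref{eq:no EH} drops constraints from \eqref{eq:P1}, and for the reverse implication you fix the feasible $\rho_k$'s and scale all beamformers by a common factor $t\geq 1$, noting that this preserves the SINR constraints while driving each harvested power to infinity. The paper's proof is terser and does not spell out the monotonicity of the SINR in $t$ or the positivity of $\sum_j |\bh_k^H\bv_j|^2$, but the underlying argument is identical.
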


\begin{proof}
Please refer to Appendix \ref{appendix1}.
\end{proof}

Lemma \ref{lem:no EH} indicates that the feasibility of problem \eqref{eq:P1} does not depend on its harvested power constraints. The following lemma further simplifies the feasibility test for problem \eqref{eq:no EH}.

\begin{lemma}\label{lem:no rho}
Problem \eqref{eq:no EH} is feasible if and only if the following problem is feasible.
\begin{equation}
\label{eq:no rho}
\begin{split}
\textrm{find} &~~~~\{\bv_k\} \\
\st & ~~~ \frac{|\bh_k^H\bv_k|^2}{\sum_{j\neq k}|\bm{h}_k^H\bv_j|^2+\sigma_k^2+\delta_k^2}{\geq} \gamma_k, ~~ \forall k.
\end{split}
\end{equation}
\end{lemma}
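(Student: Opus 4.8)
The plan is to establish the two implications separately, in both cases by elementary manipulation of the SINR expressions, using only that $0<\rho_k<1$ and that the noise variances $\sigma_k^2$ and $\delta_k^2$ are strictly positive.

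For the ``only if'' direction I would take any $\{\bv_k,\rho_k\}$ feasible for \eqref{eq:no EH} and divide the numerator and denominator of each $\SINR_k$ by $\rho_k$, rewriting the $k$-th constraint as $|\bh_k^H\bv_k|^2 \big/\big(\sum_{j\neq k}|\bh_k^H\bv_j|^2+\sigma_k^2+\delta_k^2/\rho_k\big)\geq\gamma_k$. Since $0<\rho_k<1$ gives $\delta_k^2/\rho_k\geq\delta_k^2$, replacing $\delta_k^2/\rho_k$ by $\delta_k^2$ can only decrease the effective noise and hence only increase the ratio, so the same $\{\bv_k\}$ satisfies all constraints of \eqref{eq:no rho}. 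This direction is immediate.

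For the ``if'' direction — which I expect to be the slightly more delicate one — I would start from a feasible $\{\bv_k\}$ of \eqref{eq:no rho} and show that a common power scaling $\bv_k\mapsto\sqrt{t}\,\bv_k$, together with an arbitrary fixed choice $\rho_k\in(0,1)$ (say $\rho_k=1/2$), yields a feasible point of \eqref{eq:no EH} once $t$ is large enough. The key observations are: (i) feasibility for \eqref{eq:no rho} with $\gamma_k>0$ forces $|\bh_k^H\bv_k|^2\geq\gamma_k\big(\sum_{j\neq k}|\bh_k^H\bv_j|^2+\sigma_k^2+\delta_k^2\big)>0$; (ii) after scaling, the $k$-th SINR equals $t\,|\bh_k^H\bv_k|^2\big/\big(t\sum_{j\neq k}|\bh_k^H\bv_j|^2+\sigma_k^2+\delta_k^2/\rho_k\big)$, which is strictly increasing in $t$ and tends, as $t\to\infty$, to $|\bh_k^H\bv_k|^2\big/\sum_{j\neq k}|\bh_k^H\bv_j|^2$ (to be read as $+\infty$ when the denominator vanishes); and (iii) because $\sigma_k^2+\delta_k^2>0$, the inequality in (i) is in fact \emph{strict} against $\gamma_k\sum_{j\neq k}|\bh_k^H\bv_j|^2$, so this limit strictly exceeds $\gamma_k$. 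Hence for each $k$ there is a finite threshold $t_k$ beyond which the $k$-th constraint holds, and $t=\max_k t_k$ gives the desired point.

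The main obstacle is precisely point (iii): it is not enough to say the per-user SINR converges to something $\geq\gamma_k$, since a strictly increasing sequence approaching $\gamma_k$ would never reach it in finitely many steps; one must show the limit is \emph{strictly} above $\gamma_k$, which is exactly where strict positivity of the receiver noise (equivalently, that the interference-free ``zero-forcing'' limit beats the noisy target) enters. Everything else is routine algebra, and I would write the final proof as two short implications with the scaling construction spelled out explicitly.
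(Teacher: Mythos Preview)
Your proof is correct. The ``only if'' direction matches the paper's argument exactly (divide through by $\rho_k$ and use $\delta_k^2/\rho_k>\delta_k^2$). For the ``if'' direction, however, the paper uses a cleaner construction that sidesteps your limiting argument entirely: given $\{\bv_k\}$ feasible for \eqref{eq:no rho} and any fixed $\rho\in(0,1)$, set $\bar{\bv}_k=\bv_k/\sqrt{\rho}$ and $\bar{\rho}_k=\rho$. Substituting directly gives
\[
\frac{\bar{\rho}_k|\bh_k^H\bar{\bv}_k|^2}{\sum_{j\neq k}\bar{\rho}_k|\bh_k^H\bar{\bv}_j|^2+\bar{\rho}_k\sigma_k^2+\delta_k^2}
=\frac{|\bh_k^H\bv_k|^2}{\sum_{j\neq k}|\bh_k^H\bv_j|^2+\rho\,\sigma_k^2+\delta_k^2}
\geq\frac{|\bh_k^H\bv_k|^2}{\sum_{j\neq k}|\bh_k^H\bv_j|^2+\sigma_k^2+\delta_k^2}\geq\gamma_k,
\]
so feasibility of \eqref{eq:no EH} is immediate for \emph{every} $\rho\in(0,1)$ --- no threshold $t_k$, no appeal to monotone limits, and no need for the strict-inequality observation (iii) that you flagged as the ``main obstacle.'' The key is that the specific scale factor $1/\sqrt{\rho}$ makes $\rho$ cancel in both the signal and interference terms, so the only effect is to shrink the antenna-noise term from $\sigma_k^2$ to $\rho\sigma_k^2$. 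Your asymptotic scaling argument is a natural first instinct and is perfectly valid, but the paper's explicit construction is shorter and removes precisely the delicate point you were worried about.
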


\begin{proof}
Please refer to Appendix \ref{appendix2}.
\end{proof}

Lemma \ref{lem:no rho} indicates that the feasibility of problem \eqref{eq:no EH} can be checked by letting $\rho_k\rightarrow 1$, $\forall k$. Combining Lemmas \ref{lem:no EH} and \ref{lem:no rho}, it follows that the feasibility condition for problem \eqref{eq:P1} must be the same as that of problem \eqref{eq:no rho}. Problem \eqref{eq:no rho} is a well-known SINR feasibility problem and its feasibility region over $\gamma_k$'s has been characterized in the literature \textcolor{blue}{\cite{Hunger10}}. We thus have the following proposition, which presents the sufficient and necessary condition for the feasibility of problem \eqref{eq:no rho}.


\begin{proposition}\label{pro:condition}\cite[Theorem III.1]{Hunger10}
Problem \eqref{eq:no rho} is feasible if and only if the SINR targets $\gamma_k$'s satisfy the following condition:
\begin{equation}\label{eq:condition}
\sum_{k=1}^K\frac{\gamma_k}{1+\gamma_k}\leq\rank(\bH).
\end{equation}where $\bH\triangleq[\bh_1~\bh_2~\ldots~\bh_K]$.
\end{proposition}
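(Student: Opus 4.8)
The plan is to prove Proposition~\ref{pro:condition} directly, even though it is quoted from \cite{Hunger10}, so that the paper is self-contained. The key observation is that problem \eqref{eq:no rho} has no power budget, so only the \emph{directions} of the beamformers matter; rescaling all $\bv_k$ by a common factor leaves every SINR unchanged. Hence feasibility is equivalent to: there exist $\{\bv_k\}$ with $|\bh_k^H\bv_k|^2 = \gamma_k\big(\sum_{j\neq k}|\bh_k^H\bv_j|^2 + \tilde\sigma_k^2\big)$ for all $k$, where $\tilde\sigma_k^2 \triangleq \sigma_k^2+\delta_k^2 > 0$ (equality can always be enforced by scaling each $\bv_k$ individually). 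The natural route is to reduce to the classical uplink--downlink SINR duality / the virtual-uplink power control fixed point, and then to the spectral-radius characterization of when that fixed point has a finite nonnegative solution.

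First I would set $a_k = \|\bh_k\|^2$ after projecting: actually the cleaner approach is to argue in two steps. Step~1 (reduction to $\rank(\bH)$ directions): write $\bv_k = \mathbf{U}\bm{u}_k$ where the columns of $\mathbf{U}$ form an orthonormal basis of the column space of $\bH$, which has dimension $r=\rank(\bH)$; every channel $\bh_k$ lies in this space, so the SINRs depend on $\bv_k$ only through $\bm{u}_k \in \mathbb{C}^{r}$, and WLOG we may assume $N_t = r$. Step~2 (the SINR region): for a fixed set of receive directions one has the standard result that, with $r$ transmit antennas, the SINR vector $(\gamma_1,\dots,\gamma_K)$ is achievable with \emph{some} beamformers and \emph{some} finite power if and only if $\sum_k \frac{\gamma_k}{1+\gamma_k} \le r$. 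I would prove the ``only if'' direction via the dual uplink argument: by uplink--downlink duality, downlink feasibility with $r$ antennas is equivalent to the existence of nonnegative uplink powers $q_k$ and unit-norm receive filters such that each uplink SINR equals $\gamma_k$; summing the uplink SINR expressions and using $\sum_k \mathrm{Tr}(\text{(receive covariance)}) $-type trace identities (the total ``dimension'' used is at most $r$) yields $\sum_k \frac{\gamma_k}{1+\gamma_k}\le r$. The ``if'' direction is constructive: when the inequality holds one exhibits an explicit semi-unitary beamforming configuration (e.g.\ via the result that $\{\gamma_k/(1+\gamma_k)\}$ summing to at most $r$ are exactly the squared norms realizable as diagonal entries of a projection of rank $r$, a Schur--Horn type fact) achieving all targets, then scales up the power to also satisfy any $\tilde\sigma_k^2$.

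An alternative, more elementary route avoiding explicit duality: normalize so $\|\bv_k\|$ arbitrary and note feasibility of \eqref{eq:no rho} is a closed condition in $\gamma$; parametrize by $\beta_k=\gamma_k/(1+\gamma_k)\in(0,1)$, so the SINR constraint becomes $|\bh_k^H\bv_k|^2 \ge \beta_k\big(\sum_{j}|\bh_k^H\bv_j|^2 + \tilde\sigma_k^2\big)$. Stacking, if one lets $G$ be the matrix with entries $|\bh_k^H\bv_j|^2$ one needs the diagonal of $G$ to dominate, in a weighted sense, its row sums plus noise; taking traces and bounding $\sum_k |\bh_k^H\bv_k|^2 \le \sum_k \|\bh_k\|^2\|\bv_k\|^2$ against $\sum_j \|\bv_j\|^2 \cdot(\text{something})$ using that $\sum_k \bh_k\bh_k^H/\|\bh_k\|^2$ has rank $\le r$ gives the bound; this is where the rank of $\bH$ enters, since the ``interference-plus-signal'' quadratic form $\sum_j |\bh_k^H \bv_j|^2$ summed appropriately cannot be made small in more than $r$ independent directions.

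The main obstacle is the ``if'' (sufficiency) direction, i.e.\ the explicit construction of beamformers meeting all targets when $\sum_k \beta_k \le r$: this is the genuinely nontrivial content (the Schur--Horn / semi-unitary construction, or equivalently the convergence of the iterative fixed-point power-control map to a finite vector). Since Proposition~\ref{pro:condition} is cited verbatim as \cite[Theorem~III.1]{Hunger10}, the pragmatic choice for this paper is to \emph{not} reprove it but simply invoke the reference, noting only the two trivial reductions — that scaling makes the power budget irrelevant, and that all channels live in an $r$-dimensional subspace — which connect \eqref{eq:no rho} to the standard setting of \cite{Hunger10}; the heavy lifting is then entirely borrowed.
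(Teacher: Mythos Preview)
The paper does not prove Proposition~\ref{pro:condition} at all: it is stated with the attribution \cite[Theorem~III.1]{Hunger10} and no proof (or proof sketch) follows. Your final recommendation --- to simply invoke the reference rather than reprove the result --- is therefore exactly what the paper does, so in that sense your proposal is correct and matches the paper.

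That said, the proof sketches you offer along the way are not in a state that could replace the citation. The ``only if'' argument via uplink--downlink duality is gestured at but the crucial trace/dimension-counting step (``summing the uplink SINR expressions and using $\sum_k \mathrm{Tr}(\cdot)$-type trace identities'') is left entirely unspecified, and your ``alternative, more elementary route'' never actually carries out the bounding you describe. For the ``if'' direction you correctly identify the Schur--Horn-type construction as the nontrivial content, but you do not supply it. Since you yourself conclude by deferring to \cite{Hunger10}, and since that is precisely the paper's approach, these gaps are moot; but if you intended the sketches to stand on their own, they do not.
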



Therefore, the feasibility of the JBPS problem in \eqref{eq:P1} for a given set of $\gamma_k$'s and $e_k$'s can be simply verified by checking whether $\gamma_k$'s satisfy the condition given in Proposition \ref{pro:condition}. Without loss of generality, in the rest of this paper we assume that problem \eqref{eq:P1} is feasible, unless stated otherwise.
\section{Optimal Solution}\label{sec3}

In this section, we apply the celebrated technique of SDR to solve the JBPS problem in \eqref{eq:P1} optimally. Define $\bX_k=\bv_k\bv_k^H$, $\forall k$. It then follows that $\rank(\bX_k)\leq 1$, $\forall k$. By ignoring the above rank-one constraint for all $\bX_k$'s, the SDR of problem \eqref{eq:P1} is given by

\begin{equation}
\label{eq:P2}
\begin{split}
\min_{\{\bX_k, \rho_k\}} & ~~ \sum_{k=1}^K\trace{(\bX_k)}\\
\st & ~~ \frac{\rho_k\bh_k^H\bX_k\bh_k}{\sum_{j\neq k}\rho_k\bh_k^H\bX_j\bh_k+\rho_k\sigma_k^2+\delta_k^2}{\geq} \gamma_k, ~~ \forall k, \\
&~~ \zeta_k(1-\rho_k)\left(\sum_{j=1}^K \bh_k^H\bX_j\bh_k+\sigma_k^2\right)\geq e_k, ~~ \forall k, \\
&~~0< \rho_k< 1, ~~ \forall k, \\ & ~~ \bX_k\succeq \mathbf{0}, ~~ \forall k.
\end{split}
\end{equation}

Problem \eqref{eq:P2} is still non-convex in its current form since both the SINR and harvested power constraints involve coupled $\bX_k$'s and $\rho_k$'s. However, problem \eqref{eq:P2} can be reformulated as the following problem.
\begin{equation}
\label{eq:P3}
\begin{split}
\min_{\{\bX_k, \rho_k\}} & ~~ \sum_{k=1}^K \trace{(\bX_k)}\\
\st & ~~ \frac{1}{\gamma_k}\bh_k^H\bX_k\bh_K-\sum_{j\neq k}\bh_k^H\bX_j\bh_k\geq \sigma_k^2+\frac{\delta_k^2}{\rho_k}, ~~ \forall k,\\
& ~~ \sum_{j=1}^K \bh_k^H\bX_j\bh_k\geq \frac{e_k}{\zeta_k(1-\rho_k)}-\sigma_k^2, ~~ \forall k, \\
&~~\bX_k\succeq 0, ~~ \forall k,\\
&~~0<\rho_k< 1, ~~ \forall k.
\end{split}
\end{equation}
Note that problem \eqref{eq:P3} is convex due to the fact that both $\frac{1}{\rho_k}$ and $\frac{1}{1-\rho_k}$ are convex functions over $\rho_k$ with $0<\rho_k<1$. Let $\{\bX_k^\ast\}$ and $\{\rho_k^\ast\}$ denote the optimal solution to problem \eqref{eq:P3}. If $\{\bX_k^\ast\}$ satisfies $\rank(\bX_k^\ast)=1$, $\forall k$, then the optimal beamforming solution $\bv_k^\ast$ to problem \eqref{eq:P1} can be obtained from the eigenvalue decomposition (EVD) of $\bX_k^\ast$, $k=1,\cdots,K$, and the optimal PS solution of problem \eqref{eq:P1} is also given by the associated $\rho_k^\ast$'s. Otherwise, if there exists any $k$ such that $\rank(\bX_k^\ast)>1$, then in general the solution $\{\bX_k^\ast\}$ and $\{\rho_k^\ast\}$ of problem \eqref{eq:P3} is not necessarily optimal for problem \eqref{eq:P1}. In the following proposition, we show that it is indeed true that for problem \eqref{eq:P3}, the solution satisfies $\rank(\bX_k^\ast)=1$, $\forall k$, i.e., the SDR is tight.

\begin{proposition}\label{pro:rank one}

For problem \eqref{eq:P3} given $\gamma_k>0$ and $e_k>0$, $\forall k$, we have

\begin{itemize}
\item[1)] $\{\bX_k^\ast\}$ and $\{\rho_k^\ast\}$ satisfy the first two sets of constraints of problem \eqref{eq:P3} with equality;
\item[2)] $\{\bX_k^\ast\}$ satisfies $\rank(\bX_k^\ast)=1$, $\forall k$.
\end{itemize}
\end{proposition}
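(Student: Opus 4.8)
The plan is to analyze the Karush–Kuhn–Tucker (KKT) conditions of the convex problem \eqref{eq:P3}. First I would argue part 1) by a contradiction/perturbation argument: if some SINR constraint were strict at the optimum, one could scale down $\bX_k^\ast$ slightly along the $\bh_k$ direction (or reduce $\trace(\bX_k^\ast)$) without violating that constraint, and the interference term $\sum_{j\neq k}\bh_k^H\bX_j\bh_k$ appearing in other users' constraints only improves, contradicting optimality. For the harvested power constraints, if some were strict, one could increase $\rho_k^\ast$ slightly toward $1$, which relaxes the SINR constraint (its right-hand side $\sigma_k^2+\delta_k^2/\rho_k$ decreases) and keeps feasibility, again allowing a power reduction. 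So all constraints in the first two sets must be active at the optimum; this also justifies introducing strictly positive arguments for the role the dual variables will play.

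Next, for part 2), I would write the Lagrangian of \eqref{eq:P3} with multipliers $\alpha_k\ge 0$ for the SINR constraints, $\beta_k\ge 0$ for the harvested power constraints, and the PSD matrix dual variable $\bY_k\succeq \mathbf{0}$ for $\bX_k\succeq\mathbf{0}$. The stationarity condition with respect to $\bX_k$ yields an equation of the form
\begin{equation}
\bY_k=\bI_{N_t}-\Bigl(\frac{\alpha_k}{\gamma_k}+\beta_k\Bigr)\bh_k\bh_k^H+\sum_{j\neq k}(\alpha_j-\beta_j)\,\bh_j\bh_j^H =: \bA_k-\Bigl(\frac{\alpha_k}{\gamma_k}+\beta_k\Bigr)\bh_k\bh_k^H,
\end{equation}
where $\bA_k\triangleq\bI_{N_t}+\sum_{j\neq k}(\alpha_j-\beta_j)\bh_j\bh_j^H$ collects all terms not involving $\bh_k\bh_k^H$. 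The complementary slackness $\bY_k\bX_k^\ast=\mathbf{0}$ then forces the range of $\bX_k^\ast$ into the null space of $\bY_k$. The standard route is: (i) show $\bA_k\succ\mathbf{0}$ (strictly positive definite), so that $\bY_k=\bA_k-\theta_k\bh_k\bh_k^H$ with $\theta_k\ge 0$ is a rank-one downward perturbation of a full-rank matrix, whence $\rank(\bY_k)\ge N_t-1$; (ii) conclude $\rank(\bX_k^\ast)\le N_t-\rank(\bY_k)\le 1$; (iii) rule out $\bX_k^\ast=\mathbf{0}$ using part 1), since $\gamma_k>0$ makes the active SINR constraint impossible with $\bX_k^\ast=\mathbf{0}$ (its left-hand side would be $\le 0<\sigma_k^2+\delta_k^2/\rho_k^\ast$). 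Hence $\rank(\bX_k^\ast)=1$.

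The main obstacle is step (i): establishing $\bA_k\succ\mathbf{0}$. The difficulty is that the terms $(\alpha_j-\beta_j)\bh_j\bh_j^H$ for $j\neq k$ need not be positive semidefinite, because $\beta_j$ (the energy-constraint multiplier) enters with a negative sign. I would handle this by examining the dual feasibility more carefully — for instance, showing that if $\bA_k$ had a null vector $\bu$, then from $\bY_k\succeq\mathbf{0}$ one gets $\bu^H\bY_k\bu = -\theta_k|\bh_k^H\bu|^2\ge 0$, forcing $\bh_k^H\bu=0$ (when $\theta_k>0$), and then $\bY_k\bu=\mathbf{0}$; feeding this back, $\bX_k^\ast$ could be shifted along $\bu\bu^H$, and one derives a contradiction with optimality of the objective or with the (now active) constraints. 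An alternative, cleaner argument examines the dual problem directly and shows the optimal dual variables must satisfy $\bA_k\succ\mathbf{0}$ because otherwise the dual objective would be $-\infty$ or the primal unbounded below — this is the part where I would expect to spend the most care, and it typically hinges on the fact that the objective $\sum_k\trace(\bX_k)$ has coefficient matrix exactly $\bI_{N_t}$, giving the ``$+\bI_{N_t}$'' that anchors $\bA_k$ away from singularity once the $\beta_j$ contributions are controlled.
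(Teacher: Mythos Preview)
Your overall architecture for part 2) --- write the stationarity condition as a full-rank matrix minus a rank-one term, bound $\rank(\bY_k)\ge N_t-1$, and invoke complementary slackness --- matches the paper. But two steps are genuinely incomplete.

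\textbf{Part 1).} The perturbation argument overlooks the cross-coupling through the harvested-power constraints: $\bX_k$ appears in \emph{every} user's energy constraint via $\sum_j\bh_i^H\bX_j\bh_i$, so scaling $\bX_k^\ast$ down may violate user $i$'s (tight) harvested-power constraint for any $i$. Compensating by decreasing $\rho_i$ to restore energy feasibility tightens user $i$'s SINR constraint, and nothing guarantees the interference slack you gained dominates. The paper sidesteps this by working on the dual side: it uses the stationarity condition in $\rho_k$, namely that $\rho_k^\ast$ must minimize $\lambda_k^\ast\delta_k^2/\rho_k+\mu_k^\ast e_k/[\zeta_k(1-\rho_k)]$ over $(0,1)$. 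This immediately rules out $(\lambda_k^\ast,\mu_k^\ast)$ equal to $(0,+)$ or $(+,0)$ (no interior minimizer exists), and the case $\lambda_k^\ast=\mu_k^\ast=0$ is dispatched by the same matrix argument used for part 2), which forces $\bX_k^\ast=\mathbf 0$, a contradiction. Strict positivity of both multipliers then gives activeness by complementary slackness.

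\textbf{Part 2), step (i).} Your null-vector argument for $\bA_k\succ\mathbf 0$ uses only $\bY_k\succeq\mathbf 0$ for the \emph{single} index $k$; this yields only $\bh_k^H\bu=0$, and ``shifting $\bX_k^\ast$ along $\bu\bu^H$'' does not close the loop (positive shifts raise the objective; negative shifts may break PSD or energy feasibility). The paper instead groups terms into a \emph{common} matrix $\bB^\ast=\bI_{N_t}+\sum_{j=1}^K(\lambda_j^\ast-\mu_j^\ast)\bh_j\bh_j^H$, so that the dual feasibility matrix equals $\bB^\ast-(\lambda_k^\ast/\gamma_k+\lambda_k^\ast)\bh_k\bh_k^H$ for \emph{every} $k$. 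If $\bx^H\bB^\ast\bx=0$, then applying dual feasibility for \emph{all} $k$ with $\lambda_k^\ast>0$ forces $\bh_k^H\bx=0$ for all such $k$, whence $\bx^H\bB^\ast\bx=\|\bx\|^2>0$, a contradiction. The idea you are missing is to exploit the dual feasibility constraints for all users simultaneously; once you do that, your own $\bA_k\succ\mathbf 0$ can in fact be proved by the same all-$k$ trick, but the paper's choice of a single common $\bB^\ast$ makes the step transparent.
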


\begin{proof}
Please refer to Appendix \ref{appendix4}.
\end{proof}

The second part of Proposition \ref{pro:rank one} indicates that the rank relaxation on $\bX_k$'s in problem \eqref{eq:P2} or \eqref{eq:P3} results in no loss of optimality to problem \eqref{eq:P1}; thus, the optimal solution to problem \eqref{eq:P1} can be obtained via solving problem \eqref{eq:P3} by interior-point algorithm\cite{cvx_book} using existing software, e.g., CVX \cite{cvx2012}. \textcolor{blue}{According to \cite[sec. 6.6.3]{Ben2001}, it is known that\footnote{For better efficiency, the dual problem of problem \eqref{eq:P3} could be solved instead of problem \eqref{eq:P3}. The complexity of solving the dual problem is $O\left(\sqrt{KN_t}\left(K^3N_t^2+K^2N_t^3\right)\right)$ \cite[sec. 6.6.3]{Ben2001}.} the complexity of the interior-point algorithm for solving problem \eqref{eq:P3} is $O\left(\sqrt{KN_t}\left(K^3N_t^2+K^2N_t^3\right)\right)$}. To summarize, one algorithm for solving problem \eqref{eq:P1} is given in Table \ref{tab:beamforming_alg} as Algorithm 1. Furthermore, the first part of Proposition \ref{pro:rank one} suggests that with the optimal beamforming and PS solution, both the SINR and harvested power constraints in problem \eqref{eq:P1} should hold with equality for all MSs.

\begin{table}[h]
\centering
\caption{Algorithm 1: Optimal algorithm for problem \eqref{eq:P1}}\label{tab:beamforming_alg}
\begin{tabular}{|p{3.in}|}
\hline
\begin{itemize}
\item [1.] \; Check whether the SINR targets $\gamma_k$, $k=1,\cdots,K$, satisfy the feasibility condition given in \eqref{eq:condition}. If no, exit the algorithm; otherwise, go to step 2.
\item[2.] \;  Solve problem \eqref{eq:P3} by CVX, and obtain the optimal solution as $\{\bX_k^\ast\}$ and $\{\rho_k^\ast\}$.
\item[3.] \;  Obtain $\bv_k^*$ by EVD of $\bX_k^\ast$, $k=1,\cdots,K$.

\end{itemize}
\\
\hline
\end{tabular}
\end{table}


\section{Suboptimal Solutions}

The optimal solution to problem \eqref{eq:P1} derived in the previous section requires a joint optimization of the beamforming vectors and PS ratios. In this section, we present two suboptimal algorithms for problem \eqref{eq:P1} to achieve lower complexity. Both algorithms are based on the approach of separately designing the beamforming vectors and PS ratios, while the ZF-based and SINR-optimal based criteria are applied for the beamforming design in the two algorithms, respectively. \textcolor{blue}{We also study the asymptotic optimality of both suboptimal algorithms.}

\subsection{ZF Beamforming}
When $N_t\geq K$, the ZF beamforming scheme can be used to eliminate the multiuser interference by restricting $\bv_k$'s to satisfy $\bh_i^H\bv_k=0$, $\forall i\neq k$, which simplifies the beamforming design. With ZF transmit beamforming, problem \eqref{eq:P1} reduces to the following problem.
\begin{equation}
\label{eq:P1ZF}
\begin{split}
\min_{\{\bv_k, \rho_k\}} & ~~ \sum_{k=1}^K ||\bv_k||^2\\
\st & ~~ \frac{\rho_k|\bh_k^H\bv_k|^2}{\rho_k\sigma_k^2+\delta_k^2}{\geq} \gamma_k, ~~ \forall k, \\
&~~\zeta_k(1-\rho_k)(|\bh_k^H\bv_k|^2+\sigma_k^2)\geq e_k, ~~ \forall k, \\
&~~\bH_k^H\bv_k=0, ~~ \forall k, \\
&~~0<\rho_k< 1, ~~ \forall k.
\end{split}
\end{equation}
where $\bH_k\triangleq [\bh_1~\cdots~\bh_{k-1} ~\bh_{k+1}~\cdots~\bh_K]\in\Cdom^{N_t\times (K-1)}$. \textcolor{blue}{It is readily seen that problem \eqref{eq:P1ZF} must be feasible if $N_t\geq K$ and furthermore $\bh_k$'s are not linear dependent.} The following proposition then gives the optimal solution to problem \eqref{eq:P1ZF} in closed-form.

\begin{proposition}\label{propZF}
Let $\bU_k$ denote the orthogonal basis of the null space of $\bH_k^H$, $k=1,\cdots,K$. Define $\alpha_k\triangleq\frac{e_k}{\zeta_k(\gamma_k+1)\sigma_k^2}$ and $\beta_k\triangleq\frac{\gamma_k\delta_k^2}{(\gamma_k+1)\sigma_k^2}$, $\forall k$. The optimal solution to problem \eqref{eq:P1ZF} is given by
\begin{align}
&\bar{\rho}_k^* =\frac{-(\alpha_k+\beta_k-1)+\sqrt{(\alpha_k+\beta_k-1)^2+4\beta_k}}{2},  ~~ k=1,\cdots,K, \label{eq:ZFsol1}\\
&\bar{\bv}_k^*=\sqrt{\gamma_k\left(\sigma_k^2+\frac{\delta_k^2}{\bar{\rho}_k^*}\right)}\frac{\bU_k\bU_k^H\bh_k}{\bh_k^H\bU_k\bU_k^H\bh_k}, ~~ k=1,\cdots,K. \label{eq:ZFsol2}
\end{align}
\end{proposition}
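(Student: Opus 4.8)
The plan is to exploit the fact that the zero-forcing constraints $\bH_k^H\bv_k=\mathbf{0}$ make problem \eqref{eq:P1ZF} \emph{separable} across the $K$ users: the objective $\sum_k\|\bv_k\|^2$ is a sum of per-user terms, and every remaining constraint couples only the pair $(\bv_k,\rho_k)$. Hence it suffices to solve, for each fixed $k$, the scalar subproblem of minimizing $\|\bv_k\|^2$ subject to the $k$-th SINR constraint, the $k$-th harvested-power constraint, $\bH_k^H\bv_k=\mathbf{0}$, and $0<\rho_k<1$, and then collect the per-user solutions. Feasibility (guaranteed when $N_t\ge K$ and the $\bh_k$'s are linearly independent, as noted before the proposition) ensures $\bh_k^H\bU_k\bU_k^H\bh_k>0$.

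For a fixed $k$ I would first eliminate the null-space constraint by writing $\bv_k=\bU_k\bw_k$, so that $\|\bv_k\|^2=\|\bw_k\|^2$ (since $\bU_k$ has orthonormal columns) and $\bh_k^H\bv_k=(\bU_k^H\bh_k)^H\bw_k$. By Cauchy--Schwarz, for any prescribed value $g_k\triangleq|\bh_k^H\bv_k|^2$ the smallest achievable $\|\bw_k\|^2$ equals $g_k/(\bh_k^H\bU_k\bU_k^H\bh_k)$, attained (uniquely up to phase) by the matched choice $\bw_k\propto\bU_k^H\bh_k$, i.e.\ $\bv_k\propto\bU_k\bU_k^H\bh_k$. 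Substituting back and rearranging the SINR and EH constraints, the subproblem reduces to minimizing $g_k$ over $g_k\ge0$ and $\rho_k\in(0,1)$ subject to the two lower bounds $g_k\ge\phi_1(\rho_k)$ and $g_k\ge\phi_2(\rho_k)$, where
\begin{equation}
\phi_1(\rho_k)\triangleq\gamma_k\left(\sigma_k^2+\frac{\delta_k^2}{\rho_k}\right),\qquad \phi_2(\rho_k)\triangleq\frac{e_k}{\zeta_k(1-\rho_k)}-\sigma_k^2.
\end{equation}

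Since $g_k$ only needs to clear both bounds, the optimal value is $g_k^\ast=\min_{\rho_k\in(0,1)}\max\{\phi_1(\rho_k),\phi_2(\rho_k)\}$. I would then observe that $\phi_1$ is strictly decreasing with $\phi_1(\rho_k)\to\infty$ as $\rho_k\to0^+$, while $\phi_2$ is strictly increasing with $\phi_2(\rho_k)\to\infty$ as $\rho_k\to1^-$; hence $\phi_1-\phi_2$ is strictly decreasing from $+\infty$ to $-\infty$ on $(0,1)$ and has a unique zero $\bar{\rho}_k^\ast$, at which the pointwise maximum is minimized. Setting $\phi_1(\rho_k)=\phi_2(\rho_k)$ and dividing through by $(\gamma_k+1)\sigma_k^2$ turns the crossing condition into $1+\beta_k/\rho_k=\alpha_k/(1-\rho_k)$, i.e.\ the quadratic $\rho_k^2+(\alpha_k+\beta_k-1)\rho_k-\beta_k=0$; since the product of its roots is $-\beta_k<0$, its unique positive root is exactly \eqref{eq:ZFsol1}, and evaluating the quadratic at $\rho_k=0$ (value $-\beta_k<0$) and at $\rho_k=1$ (value $\alpha_k>0$) confirms $\bar{\rho}_k^\ast\in(0,1)$, so it is feasible. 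Finally I would read off $g_k^\ast=\phi_1(\bar{\rho}_k^\ast)=\gamma_k(\sigma_k^2+\delta_k^2/\bar{\rho}_k^\ast)$ and the matched beamformer $\bar{\bv}_k^\ast=\sqrt{g_k^\ast}\,\bU_k\bU_k^H\bh_k/(\bh_k^H\bU_k\bU_k^H\bh_k)$, which is \eqref{eq:ZFsol2}.

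I expect the only genuine subtlety to be the min--max step: one must argue that \emph{both} the SINR and EH constraints are active at the optimum, so that equating $\phi_1$ and $\phi_2$ is not merely feasible but optimal --- this follows from the strict and opposite monotonicity of $\phi_1$ and $\phi_2$, which forces $\max\{\phi_1,\phi_2\}$ to be unimodal with its minimum at the crossing point. One must also verify that the crossing point lies strictly inside $(0,1)$ rather than at the boundary, which is precisely where the hypotheses $e_k>0$ and $\delta_k^2>0$ (equivalently $\alpha_k>0$, $\beta_k>0$) are used. The remaining work --- the Cauchy--Schwarz reduction and the quadratic algebra --- is routine.
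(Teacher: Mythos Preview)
Your proposal is correct and follows essentially the same route as the paper's proof: decompose across users via the ZF structure, argue that both the SINR and harvested-power constraints must be active at the optimum, solve the resulting single-variable equation for $\rho_k$ (the same quadratic, derived the same way), and recover the beamformer as the scaled null-space projection of $\bh_k$. The only cosmetic difference is that you justify tightness via the strict opposite monotonicity of $\phi_1$ and $\phi_2$ (hence the min--max is attained at the crossing), whereas the paper runs a short case-by-case contradiction; both arguments are standard and yield the same conclusion.
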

\begin{proof}
Please refer to Appendix \ref{appendix5}.
\end{proof}

In Table \ref{tab:ZF_alg}, we summarize the above algorithm for problem \eqref{eq:P1} based on the ZF transmit beamforming as Algorithm 2. \textcolor{blue}{Clearly, the complexity of Algorithm 2 is dominated by the $K$ times of SVD operations. Since each SVD operation takes an $O((K-1)^3+N_t^2(K-1))$ complexity, the complexity of Algorithm 2 is $O(K^4+K^2N_t^2)$.}

\begin{table}
\centering
\caption{Algorithm 2: ZF beamforming based suboptimal algorithm for problem \eqref{eq:P1}}\label{tab:ZF_alg}
\begin{tabular}{|p{3.in}|}
\hline
\begin{itemize}
\item [1.]\; Set $\bH_k=\left[\bh_1,\ldots,\bh_{k-1}, \bh_{k+1},\ldots, \bh_K\right]$, $\forall k$.
\item [2.] \; Set $\bU_k=\textrm{null}(\bH_k^H)$, $\forall k$, where `null($\cdot$)' is a Matlab function which computes the orthonormal basis for the null space of a matrix using singular value decomposition (SVD).
\item[3.]\; Set $\alpha_k=\frac{e_k}{\zeta_k(\gamma_k+1)\sigma_k^2}$ and $\beta_k=\frac{\gamma_k\delta_k^2}{(\gamma_k+1)\sigma_k^2}$, $\forall k$.
\item[4.] \;  Set $\bar{\rho}_k^* =\frac{-(\alpha_k+\beta_k-1)+\sqrt{(\alpha_k+\beta_k-1)^2+4\beta_k}}{2}$, $\forall k$.
\item [5.] \; Set $\bar{\bv}_k^*=\sqrt{\gamma_k\left(\sigma_k^2+\frac{\delta_k^2}{\bar{\rho}_k^*}\right)}\frac{\bU_k\bU_k^H\bh_k}{\bh_k^H\bU_k\bU_k^H\bh_k}$, $\forall k$.
\end{itemize}
\\
\hline
\end{tabular}
\end{table}

\begin{remark}\label{remark2}
In Proposition \ref{propZF}, we have assumed that $\sigma_k^2>0$, $\forall k$, i.e., the antenna noise at each MS is non-zero. In practice, the antenna noise power $\sigma_k^2$ is much smaller than the ID noise power $\delta_k^2$ \cite{Rui12}. Thus, if we neglect the antenna noise power by setting $\sigma_k^2=0$, $\forall k$, the optimal PS solution can be shown (see \eqref{eq:rho_equation} in Appendix \ref{appendix5}) to be \begin{align}\label{eq:r}\bar{\rho}_k^*=\frac{\gamma_k\delta_k^2}{\frac{e_k}{\zeta_k}+\gamma_k\delta_k^2},\end{align}which has an simpler form than that given in \eqref{eq:ZFsol1} for $\sigma_k^2>0$.
\end{remark}

\subsection{SINR-Optimal Beamfoming}

Algorithm 2 works only when $N_t\geq K$ due to the ZF transmit beamforming. Alternatively, we can apply SINR-optimal transmit beamforming, which works for arbitrary values of $N_t$ and $K$. The SINR-optimal transmit beamforming vectors can be first obtained by solving the following power minimization problem with only SINR constraints, i.e.,
\begin{equation}
\label{eq:power_min_SINR_only}
\begin{split}
\min_{\{\bv_k\}} & ~~~ \sum_{k=1}^K||\bv_k||^2\\
\st & ~~~ \frac{|\bh_k^H\bv_k|^2}{\sum_{j\neq k}|\bm{h}_k^H\bv_j|^2+\sigma_k^2+\delta_k^2}{\geq} \gamma_k, ~~ \forall k.
\end{split}
\end{equation}
Note that problem \eqref{eq:power_min_SINR_only} is feasible if and only if problem \eqref{eq:P1} is feasible (see Section \ref{sec2}). Problem \eqref{eq:power_min_SINR_only} has been well studied in the literature and can be efficiently solved by existing techniques \cite{Ottersten2001,Codreanu07,Boche2004}. Let $\{\hat{\bv}_k\}$ denote the solution of problem \eqref{eq:power_min_SINR_only}. Next, we scale up the beamformers $\{\hat{\bv}_k\}$ by a common factor $\sqrt{\alpha}$ and then jointly optimize $\alpha$ and receive PS ratios $\rho_k$'s to satisfy both the SINR and harvested power constraints in problem \eqref{eq:P1} and yet minimize the transmission power. Specifically, we consider the following problem with given $\{\hat{\bv}_k\}$.
\begin{equation}
\label{eq:P1-subopt2}
\begin{split}
\min_{\alpha, \{\rho_k\}}  & ~~ \sum_{k=1}^K \alpha||\hat{\bv}_k||^2\\
\st & ~~ \frac{\rho_k\alpha|\bh_k^H\hat{\bv}_k|^2}{\sum_{j\neq k}\rho_k\alpha|\bm{h}_k^H\hat{\bv}_j|^2+\rho_k\sigma_k^2+\delta_k^2}{\geq} \gamma_k, ~~ \forall k,\\
&~~\zeta_k(1-\rho_k)(\alpha\sum_{j=1}^K |\bh_k^H\hat{\bv}_j|^2+\sigma_k^2)\geq e_k, ~~ \forall k, \\
&~~0< \rho_k< 1, ~~ \forall k,\\
&~~\alpha>1.
\end{split}
\end{equation}
Since all the SINR constraints in problem \eqref{eq:power_min_SINR_only} can be shown to hold with equality by $\{\hat{\bv}_k\}$ \cite{Boche2004}, the SINR constraints in problem \eqref{eq:P1-subopt2} hold with equality when $\alpha=1$ and $\rho_k=1$, $\forall k$. However, to satisfy the additional harvested power constraints in problem \eqref{eq:P1-subopt2}, it is required that $0<\rho_k<1$, $\forall k$. As a result, we need $\alpha>1$ to satisfy both the SINR and harvested power constraints in problem \eqref{eq:P1-subopt2}, which is given as the last constraint in \eqref{eq:P1-subopt2}. Next, we present the following proposition for problem \eqref{eq:P1-subopt2}.

\begin{proposition}\label{prop:subopt_alg2}
\
\begin{itemize}
\item[1)] Problem \eqref{eq:P1-subopt2} is feasible if and only if problem \eqref{eq:P1} is feasible;
\item[2)] Define $c_k\triangleq\frac{|\bh_k^H\hat{\bv}_k|^2}{\gamma_k}-\sum_{j\neq k}|\bm{h}_k^H\hat{\bv}_j|^2$ and $d_k\triangleq\sum_{j=1}^K|\bm{h}_k^H\hat{\bv}_j|^2$, $k=1,\cdots,K$. For each $k$, let $\bar{\alpha}_k$ be the largest real root of the following quadratic equation:
$$\frac{\delta_k^2}{\alpha c_k-\sigma_k^2}+\frac{e_k}{\zeta_k(\alpha d_k+\sigma_k^2)}= 1.$$ Then $\tilde{\alpha}^*=\max_{1\leq k\leq K}\bar{\alpha}_k$ and $\tilde{\rho}_k^*=\frac{\delta_k^2}{\tilde{\alpha}^* c_k-\sigma_k^2}$, $\forall k$, is the optimal solution to problem \eqref{eq:P1-subopt2}.
\end{itemize}
\end{proposition}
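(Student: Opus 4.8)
The plan is to eliminate the power-splitting ratios $\rho_k$ from problem \eqref{eq:P1-subopt2} and reduce it to a one-dimensional problem in $\alpha$. The first step uses the fact, recalled from \cite{Boche2004}, that the SINR-optimal beamformers $\{\hat{\bv}_k\}$ attain all SINR constraints of \eqref{eq:power_min_SINR_only} with equality; this forces $c_k = \sigma_k^2 + \delta_k^2 > 0$, and $d_k \geq |\bh_k^H\hat{\bv}_k|^2 > 0$ (since $\gamma_k > 0$). Consequently, for any $\alpha > 1$ we have $\alpha c_k - \sigma_k^2 = (\alpha-1)\sigma_k^2 + \alpha\delta_k^2 > 0$, so the SINR constraint in \eqref{eq:P1-subopt2} can be cleared of $\rho_k$ and rearranged into the equivalent linear bound $\rho_k \geq \delta_k^2/(\alpha c_k - \sigma_k^2)$, while the harvested-power constraint is equivalent to $\rho_k \leq 1 - e_k/(\zeta_k(\alpha d_k + \sigma_k^2))$. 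Both bounds lie in $(0,1)$, so for a fixed $\alpha > 1$ a feasible $\rho_k \in (0,1)$ exists if and only if the lower bound does not exceed the upper bound, i.e. $f_k(\alpha) \triangleq \delta_k^2/(\alpha c_k - \sigma_k^2) + e_k/(\zeta_k(\alpha d_k + \sigma_k^2)) \leq 1$ for every $k$.

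Next I would study $f_k$ on $[1,\infty)$. Each summand is positive and strictly decreasing, so $f_k$ is continuous and strictly decreasing with $f_k(\alpha)\to 0$ as $\alpha\to\infty$, and $f_k(1) = \delta_k^2/\delta_k^2 + e_k/(\zeta_k(d_k+\sigma_k^2)) = 1 + e_k/(\zeta_k(d_k+\sigma_k^2)) > 1$. Hence there is a unique $\bar{\alpha}_k > 1$ with $f_k(\bar{\alpha}_k) = 1$, and $\{\alpha\geq 1 : f_k(\alpha)\leq 1\} = [\bar{\alpha}_k,\infty)$. Clearing denominators in $f_k(\alpha) = 1$ produces a quadratic in $\alpha$ with leading coefficient $\zeta_k c_k d_k > 0$; evaluated at $\alpha=1$ this quadratic equals $\zeta_k(c_k-\sigma_k^2)(d_k+\sigma_k^2)\,(1-f_k(1)) < 0$, so it has exactly one root in $(1,\infty)$, which is therefore its largest real root and equals $\bar{\alpha}_k$. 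This is precisely the object named in the statement.

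Putting these together, problem \eqref{eq:P1-subopt2} is feasible for a given $\alpha$ exactly when $\alpha \geq \tilde{\alpha}^\ast \triangleq \max_{1\leq k\leq K}\bar{\alpha}_k$, and since $\tilde{\alpha}^\ast > 1$ the constraint $\alpha > 1$ adds nothing, so the set of feasible $\alpha$ is the half-line $[\tilde{\alpha}^\ast,\infty)$. For part~1), the ``only if'' direction is immediate because rescaling $\hat{\bv}_k$ to $\sqrt{\alpha}\,\hat{\bv}_k$ turns a feasible point of \eqref{eq:P1-subopt2} into one of \eqref{eq:P1}; for ``if'', feasibility of \eqref{eq:P1} guarantees via Section~\ref{sec2} that $\{\hat{\bv}_k\}$ exists, and then $\alpha = \tilde{\alpha}^\ast$ together with any $\rho_k$ in the (nonempty) interval above is feasible. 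For part~2), the objective $\alpha\sum_k\|\hat{\bv}_k\|^2$ is strictly increasing in $\alpha$ (the sum is positive), so the minimum is attained at the smallest feasible $\alpha$, namely $\tilde{\alpha}^\ast$; at this value the choice $\tilde{\rho}_k^\ast = \delta_k^2/(\tilde{\alpha}^\ast c_k - \sigma_k^2)$ is the lower endpoint of the $\rho_k$-interval and hence feasible for every $k$, yielding the stated optimal solution.

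The main obstacle I expect is the algebraic bookkeeping in moving between the three equivalent forms of the constraints --- the original rational inequalities, their linear-in-$\rho_k$ reformulations, and the quadratic whose largest real root defines $\bar{\alpha}_k$ --- and in particular verifying $\alpha c_k - \sigma_k^2 > 0$ (which relies on the tightness identity $c_k = \sigma_k^2 + \delta_k^2$) and correctly identifying which root of the quadratic is relevant via the sign of the quadratic at $\alpha = 1$. Once the feasible range of $\alpha$ is pinned down to a half-line, the optimization part is routine because the objective depends monotonically on the single variable $\alpha$.
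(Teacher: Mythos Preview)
Your proposal is correct and follows essentially the same route as the paper's proof: both rewrite the SINR and harvested-power constraints as the interval condition $\delta_k^2/(\alpha c_k-\sigma_k^2)\le \rho_k\le 1-e_k/(\zeta_k(\alpha d_k+\sigma_k^2))$, reduce to the single-variable condition $f_k(\alpha)\le 1$ (the paper calls this $g_k$), use the tightness identity $c_k-\sigma_k^2=\delta_k^2$ to get $f_k(1)>1$, and then combine the strict decrease of $f_k$ on $(1,\infty)$ with the sign pattern of the associated quadratic to identify $\bar\alpha_k$ as its largest real root. The only cosmetic differences are that the paper establishes part~1) first by a direct monotonicity/continuity argument on $\phi_I,\phi_E$ whereas you obtain it as a by-product of the reduction, and your claim that ``both bounds lie in $(0,1)$'' is slightly too strong (the upper bound can be negative for $\alpha$ near~$1$), but the equivalence you need only uses that the \emph{lower} bound lies in $(0,1)$, which you have.
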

\begin{proof}
Please refer to Appendix \ref{sec:subopt_alg2}.
\end{proof}

\begin{table}
\centering
\caption{Algorithm 3: SINR-optimal beamforming based suboptimal algorithm for problem \eqref{eq:P1}}\label{tab:FBPS_alg}
\begin{tabular}{|p{3.in}|}
\hline
\begin{itemize}

\item [1.]\; Obtain $\hat{\bv}_k$'s by solving problem \eqref{eq:power_min_SINR_only}.
\item [2.] \; Set $c_k=\frac{|\bh_k^H\hat{\bv}_k|^2}{\gamma_k}-\sum_{j\neq k}|\bm{h}_k^H\hat{\bv}_j|^2, \forall k$.
\item[3.]\; Set $d_k=\sum_{j=1}^K|\bm{h}_k^H\hat{\bv}_j|^2$, $\forall k$.
\item[4.] \;  Set $\bar{\alpha}_k$ as the largest real root of the following equation:
$$\frac{\delta_k^2}{\alpha c_k-\sigma_k^2}+\frac{e_k}{\zeta_k(\alpha d_k+\sigma_k^2)}= 1, \forall k.$$
\item [5.] \; Set $\tilde{\alpha}^\ast=\max_{1\leq k\leq K}\bar{\alpha}_k$.
\item [6.]\;  Set $\tilde{\rho}_k^*=\frac{\delta_k^2}{\tilde{\alpha}^\ast c_k-\sigma_k^2}$ and $\tilde{\bv}_k^*=\sqrt{\tilde{\alpha}^\ast}\hat{\bv}_k$, $\forall k$.
\end{itemize}
\\
\hline
\end{tabular}
\end{table}

With $\tilde{\alpha}^\ast$, the corresponding beamforming vectors can be obtained as $\tilde{\bv}_k^*=\sqrt{\tilde{\alpha}^*}\hat{\bv}_k$, $\forall k$. In Table \ref{tab:FBPS_alg}, the suboptimal algorithm  for problem \eqref{eq:P1} based on the SINR-optimal transmit beamforming is summarized as Algorithm 3. \textcolor{blue}{Clearly, the complexity of Algorithm 3 is dominated by solving problem \eqref{eq:power_min_SINR_only}, which requires\footnote{Problem \eqref{eq:power_min_SINR_only} can be iteratively solved using the existing uplink-downlink duality \cite{Boche2004}. The complexity of solving problem \eqref{eq:power_min_SINR_only} is dominated by the inversion operations of $K$ $N_t$--by--$N_t$ matrices (corresponding to beamforming directions update) and one $(K{+}1)$--by--$(K{+}1)$ matrix (corresponding to power allocation update). Therefore, the complexity of Algorithm 3 is $O(K^3+KN_t^3)$. Note that we have neglected the number of iterations in the above calculation since it is observed from simulations that the algorithm has very fast convergence (the number of iterations is $10\sim20$ in general).} $O(K^3+KN_t^3)$.
\begin{remark}
To summarize, the complexity of the optimal solution is $O\left(\sqrt{KN_t}\left(K^3N_t^2+K^2N_t^3\right)\right)$, while the complexity of the ZF-based suboptimal solution and the SINR-optimal suboptimal solution are $O(K^4+K^2N_t^2)$ and $O(K^3+KN_t^3)$, respectively. Note that the ZF-based suboptimal solution works only when $N_t\geq K$. Hence, we can see that, in terms of computational complexity, the suboptimal solutions are better than the optimal solution and the ZF-based suboptimal solution is better than the SINR-optimal beamforming based suboptimal solution.
\end{remark}}

\subsection{\textcolor{blue}{Asymptotic Optimality}}
Algorithms 2 and 3 are suboptimal in general. However, we will show that the two suboptimal solutions can achieve optimality in terms of minimum transmit power when the SINR target $\gamma_k$'s are asymptotically high. The intuition behind the asymptotic optimality is that the power allocation for the optimal solution and suboptimal solutions all become extremely large as $\gamma_k$'s go to infinity and furthermore the interference terms (in the SINR expression) vanish in order to keep the SINR constraints feasible with increasing $\gamma_k$'s. We summarize this result in the following proposition.
\begin{proposition}\label{prop:asym_opt}
The two suboptimal solutions are both asymptotically optimal when $\gamma_k$'s go to infinity.
\end{proposition}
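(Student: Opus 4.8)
The plan is to prove the claim by a squeeze argument. Write $P^\star$ for the optimal value of \eqref{eq:P1}, $P_{\rm ZF}$ for the transmit power produced by Algorithm~2, and $P_{\rm SINR}$ for that produced by Algorithm~3. By optimality of \eqref{eq:P1} we have $P^\star\le P_{\rm ZF}$ and $P^\star\le P_{\rm SINR}$, so it suffices to show $\limsup_{\gamma\to\infty} P_{\rm ZF}/P^\star\le 1$ and $\limsup_{\gamma\to\infty} P_{\rm SINR}/P^\star\le 1$ as all $\gamma_k\to\infty$. Throughout I work in the regime $N_t\ge K$ with $\bH$ of full column rank; this is forced by Proposition~\ref{pro:condition} (as $\gamma_k\to\infty$, $\sum_k\gamma_k/(1+\gamma_k)$ approaches $K$, so feasibility of \eqref{eq:P1} eventually requires $\rank(\bH)=K$) and is also needed for Algorithm~2 to apply, so outside this regime the statement is vacuous. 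As a common yardstick I take the pure zero-forcing power for the SINR-only problem \eqref{eq:power_min_SINR_only}, namely $P_{\rm ZF}^0\triangleq\sum_{k=1}^K \gamma_k(\sigma_k^2+\delta_k^2)/g_k$ with $g_k\triangleq\bh_k^H\bU_k\bU_k^H\bh_k>0$ and $\bU_k$ as in Proposition~\ref{propZF}; the goal becomes $P_{\rm ZF},P_{\rm SINR}\le(1+o(1))P_{\rm ZF}^0$ and $P^\star\ge(1-o(1))P_{\rm ZF}^0$.

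The upper bounds on the two suboptimal powers are the easy part. For Algorithm~2, since $\bU_k$ has orthonormal columns we have $\|\bU_k\bU_k^H\bh_k\|^2=g_k$, so Proposition~\ref{propZF} gives $\|\bar\bv_k^*\|^2=\gamma_k(\sigma_k^2+\delta_k^2/\bar\rho_k^*)/g_k$; substituting $\alpha_k\to0$ and $\beta_k\to\delta_k^2/\sigma_k^2$ into \eqref{eq:ZFsol1} and observing that the discriminant collapses to the perfect square $(\delta_k^2/\sigma_k^2+1)^2$ yields $\bar\rho_k^*\to1$, hence $P_{\rm ZF}=(1+o(1))P_{\rm ZF}^0$. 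For Algorithm~3, the SINR constraints in \eqref{eq:power_min_SINR_only} hold with equality at $\{\hat\bv_k\}$, which gives the exact identity $c_k=\sigma_k^2+\delta_k^2$, while $d_k\ge|\bh_k^H\hat\bv_k|^2\ge\gamma_k(\sigma_k^2+\delta_k^2)\to\infty$; since the left-hand side of the root equation in Proposition~\ref{prop:subopt_alg2} is strictly decreasing in $\alpha$ on $(\sigma_k^2/c_k,\infty)$, equals $1+e_k/[\zeta_k(d_k+\sigma_k^2)]>1$ at $\alpha=1$, and is $<1$ at any fixed $\alpha=1+\epsilon$ once $d_k$ is large, we get $1<\bar\alpha_k\to1$ and hence $\tilde\alpha^*=\max_k\bar\alpha_k\to1$. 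Because the zero-forcing beamformers are feasible for \eqref{eq:power_min_SINR_only}, its optimal value $\sum_k\|\hat\bv_k\|^2$ is at most $P_{\rm ZF}^0$, so $P_{\rm SINR}=\tilde\alpha^*\sum_k\|\hat\bv_k\|^2\le\tilde\alpha^*P_{\rm ZF}^0=(1+o(1))P_{\rm ZF}^0$.

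The heart of the proof is the matching lower bound $P^\star\ge(1-o(1))P_{\rm ZF}^0$, which formalizes the "interference vanishes'' intuition. Let $\{\bv_k,\rho_k\}$ be any feasible point of \eqref{eq:P1}; after cancelling $\rho_k$ the SINR constraint reads $|\bh_k^H\bv_k|^2\ge\gamma_k(\sum_{j\neq k}|\bh_k^H\bv_j|^2+\sigma_k^2+\delta_k^2/\rho_k)$, which gives both $|\bh_k^H\bv_k|^2\ge\gamma_k(\sigma_k^2+\delta_k^2)$ and, applied at user $j$, $|\bh_j^H\bv_k|^2\le\sum_{m\neq j}|\bh_j^H\bv_m|^2\le|\bh_j^H\bv_j|^2/\gamma_j\le\|\bh_j\|^2\|\bv_j\|^2/\gamma_j$ for every $k\neq j$. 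Decompose $\bv_k=\bv_k^\perp+\bv_k^\parallel$ with $\bv_k^\perp=\bU_k\bU_k^H\bv_k$ and $\bv_k^\parallel$ in $\textrm{span}\{\bh_j:j\neq k\}$; then $|\bh_k^H\bv_k^\parallel|\le\|\bm{e}_k\|\big(\sum_{j\neq k}|\bh_j^H\bv_k|^2\big)^{1/2}$ for a channel-dependent constant vector $\bm{e}_k$, and using the a~priori bound $\|\bv_j\|^2\le P^\star\le P_{\rm ZF}=O(\max_k\gamma_k)$ together with the interference estimate above shows $|\bh_k^H\bv_k^\parallel|$ stays bounded while $|\bh_k^H\bv_k|$ grows like $\sqrt{\gamma_k}$. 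Hence $|\bh_k^H\bv_k^\perp|\ge|\bh_k^H\bv_k|-|\bh_k^H\bv_k^\parallel|=|\bh_k^H\bv_k|(1-o(1))$, and Cauchy--Schwarz gives $\|\bv_k\|^2\ge\|\bv_k^\perp\|^2\ge|\bh_k^H\bv_k^\perp|^2/g_k\ge\gamma_k(\sigma_k^2+\delta_k^2)(1-o(1))/g_k$; summing over $k$ yields $P^\star\ge(1-o(1))P_{\rm ZF}^0$. Combined with the two upper bounds and with $P^\star\le\min\{P_{\rm ZF},P_{\rm SINR}\}$, this closes the squeeze and proves the proposition.

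I expect the delicate point to be the uniformity in the third paragraph: converting "interference vanishes'' into a bona fide $o(1)$ needs the explicit a~priori norm bound on the beamformers (supplied precisely because Algorithm~2 gives a closed-form finite $P_{\rm ZF}$), and it is cleanest to treat the targets along a ray $\gamma_k=t\gamma_k^0$ with $t\to\infty$, so that all error terms are $O(1/\sqrt{t})$ relative to the leading $O(t)$ term; the general case $\min_k\gamma_k\to\infty$ then follows by the same bounds with $t$ replaced by $\min_k\gamma_k$ and a little extra care. One should also state explicitly that "asymptotically optimal'' here means the \emph{ratios} $P_{\rm ZF}/P^\star$ and $P_{\rm SINR}/P^\star$ tend to $1$ (the absolute gaps need not vanish, since all three powers diverge).
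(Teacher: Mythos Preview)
Your proof is correct and is in fact more complete than the paper's own argument. The paper proceeds qualitatively: it shows that any feasible beamforming scheme (optimal or suboptimal) must have $p_k\to\infty$ and beam directions $\bar\bv_k$ asymptotically satisfying the zero-forcing condition $|\bh_k^H\bar\bv_j|\to 0$ for $j\neq k$, and from this concludes directly that ``the three algorithms will asymptotically achieve the same minimum transmit power.'' That last step is left informal---convergence of directions to the ZF subspace does not by itself pin down the limiting power ratio without some additional bookkeeping.

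Your approach supplies exactly that bookkeeping via a squeeze around the explicit benchmark $P_{\rm ZF}^0=\sum_k\gamma_k(\sigma_k^2+\delta_k^2)/g_k$. The upper bounds on $P_{\rm ZF}$ and $P_{\rm SINR}$ come straight from the closed forms in Propositions~\ref{propZF} and~\ref{prop:subopt_alg2} (the identities $c_k=\sigma_k^2+\delta_k^2$ and $\bar\rho_k^*\to 1$ are nice observations), while the lower bound on $P^\star$ is precisely the quantitative version of the paper's ``interference vanishes'' heuristic: you decompose each optimal beam into its ZF and non-ZF parts, use the SINR constraint at user $j$ to bound $|\bh_j^H\bv_k|^2\le|\bh_j^H\bv_j|^2/\gamma_j$, and feed in the a~priori bound $\|\bv_j\|^2\le P_{\rm ZF}$ from Algorithm~2 to show the non-ZF contribution is lower order. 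This yields an actual rate ($O(1/\sqrt{t})$ along rays $\gamma_k=t\gamma_k^0$), which the paper's argument does not. Your closing remarks on uniformity and on interpreting ``asymptotically optimal'' as convergence of ratios are both well taken; the paper leaves both points implicit.
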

\begin{proof}
Please refer to Appendix F.
\end{proof}

\section{Simulation results}
In this section, we numerically evaluate the performance of the proposed beamforming and power splitting algorithms in MISO SWIPT systems. We assume there are $K=4$ MSs and all MSs have the same set of parameters, i.e., $\zeta_k=\zeta$, $\delta_k^2=\delta^2$, $\sigma_k^2=\sigma^2$, $e_k=e$, and $\gamma_k=\gamma$, $\forall k$. Moreover, we set $\zeta=0.5$, $\sigma^2=-70$dBm, and $\delta^2=-50$dBm in all simulations. It is further assumed that the signal attenuation from BS to all MSs is $40$dB corresponding to an identical distance of $5$ meters. With this transmission distance, the line-of-sight (LOS) signal is dominant, and thus the Rician fading is used
to model the channel. Specifically, $\bh_k$ is expressed as
\begin{equation}
\bh_k = \sqrt{\frac{K_R}{1+K_R}}\bh_k^{{\rm LOS}}+\sqrt{\frac{1}{1+K_R}}\bh_k^{{\rm NLOS}},
\end{equation}where $\bh_k^{{\rm LOS}}\in \mathbb{C}^{N_t\times 1}$ is the LOS deterministic component, $\bh_k^{{\rm NLOS}}\in \mathbb{C}^{N_t\times 1}$ denotes the Rayleigh fading
component with each element being a CSCG random variable with zero mean and covariance of $-40$dB, and $K_R$ is the Rician factor set to be $5$dB. Note that for the LOS component, we use the far-field uniform linear antenna array model \cite{Luo07} with $\bh_k^{{\rm LOS}}=10^{-4}[1~e^{j\theta_k}~e^{j2\theta_k}~\ldots~e^{j(N_t-1)\theta_k}~]^T$ with $\theta_k=-\frac{2\pi d \sin(\phi_k)}{\lambda}$, where $d$ is the spacing between successive antenna elements at BS, $\lambda$ is the carrier wavelength, and $\phi_k$ is the direction of ${\rm MS}_k$ to BS. We set $d=\frac{\lambda}{2}$, and $\{\phi_1,\phi_2,\phi_3,\phi_4\}=\{-30^{\textrm{o}}, -60^{\textrm{o}}, 60^{\textrm{o}}, 30^{\textrm{o}}\}$.

\begin{figure}
\centering
\includegraphics[width=3.5in]{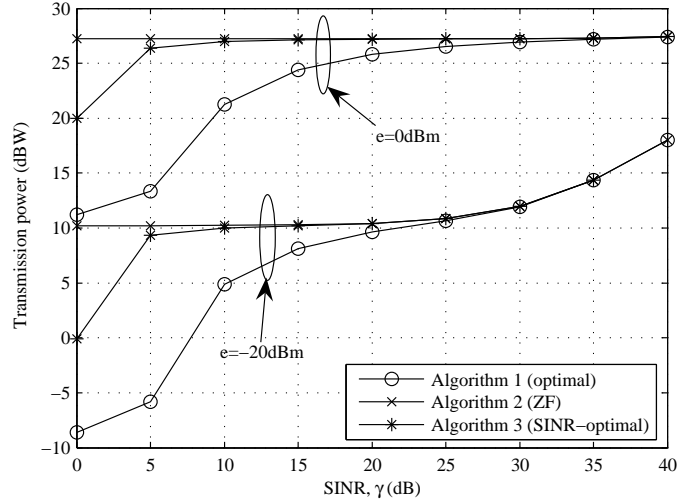}
\caption{Transmission power versus $\gamma$.}
\label{fig:fig2}
\end{figure}

First, we investigate the minimum transmission power required at BS versus the SINR target for all MSs, $\gamma$, with their harvested power constraint, $e$, being fixed. It is assumed that BS is equipped with $N_t=4$ transmit antennas. Fig. \ref{fig:fig2} shows the performance comparison by the optimal JBPS solution to problem \eqref{eq:P1} and the two suboptimal solutions based on ZF and SINR-optimal beamforming with $e=0$dBm or $e=-20$dBm. It is observed that as the harvested power constraint $e$ is increased from $-20$dBm to $0$dBm, substantially more transmission power is needed at BS for any each value of $\gamma$. It is also observed that for both the cases of $e=0$dBm and $e=-20$dBm, the minimum transmission power is achieved by the optimal JBPS solution for all values of $\gamma$. Moreover, when the SINR constraint $\gamma$ is small, the SINR-optimal based suboptimal solution is observed to achieve notably smaller transmission power than ZF based suboptimal solution. However, as $\gamma$ increases, the performance gap between the two suboptimal solutions vanishes. For example, when $\gamma>35$dB for the case of $e=0$dBm or $\gamma>25$dB for the case of $e=-20$dBm, the minimum transmission power values achieved by the two suboptimal solutions both converge to that by the optimal solution. Finally, it is observed that the transmission power with the ZF based suboptimal solution is not sensitive to the value of $\gamma$ when $\gamma$ is sufficiently small. The reason is as follows. In our simulation setup, the antenna noise power $\sigma^2$ is much smaller than the ID processing noise $\delta^2$. Thus, from \eqref{eq:r} in Remark \ref{remark2}, we have $\rho_k\approx\frac{\gamma_k\delta_k^2}{\frac{e_k}{\zeta_k}+\gamma_k\delta_k^2}$. In addition, if $\gamma$ is sufficiently small such that $\frac{e_k}{\zeta_k}$ is much larger than $\gamma_k\delta_k^2$, then we have $\rho_k\approx 0$. In this case, to meet the harvested power constrains in \eqref{eq:P1ZF}, it can be shown that $||\bar{\bv}_k^\ast||^2\approx\frac{e_k}{\zeta_k||\bU_k\bU_k^H\bh_k||^2}$, $\forall k$, which is independent of $\gamma$; thus, the total transmission power $\sum\limits_{k=1}^K||\bar{\bv}_k^\ast||^2$ is invariant over $\gamma$ in the small-$\gamma$ regime.
\begin{figure}
\centering
\includegraphics[width=3.5in]{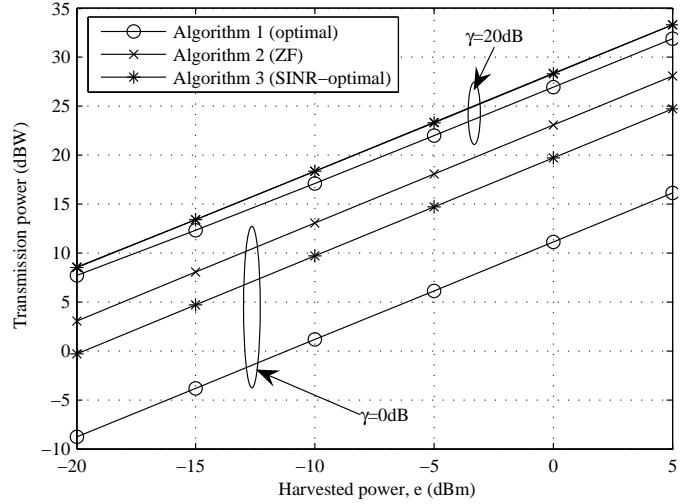}
\caption{Transmission power versus $e$.}
\label{fig:fig3}
\end{figure}

Next, we show in Fig. \ref{fig:fig3} the minimum transmission power achieved by the optimal and suboptimal algorithms over $e$ with fixed $\gamma=0$dB or $\gamma=20$dB. Similarly as in Fig. \ref{fig:fig2}, it is observed that for both the cases of $\gamma=0$dB and $\gamma=20$dB, the optimal solution achieves the minimum transmission power for all values of $e$. Furthermore, at low SINR, i.e., $\gamma=0$dB, the transmission power achieved by the SINR-optimal based solution is notably smaller than that by the ZF based solution, but all much larger than that by the optimal solution, for both values of $e$. However, at high SINR, i.e., $\gamma\geq20$dB, all the optimal and suboptimal solutions perform very closely to each other. \textcolor{blue}{This confirms our result in Proposition 5.3 regarding the asymptotic optimality of the suboptimal solutions.}

\begin{figure}
\centering
\includegraphics[width=3.5in]{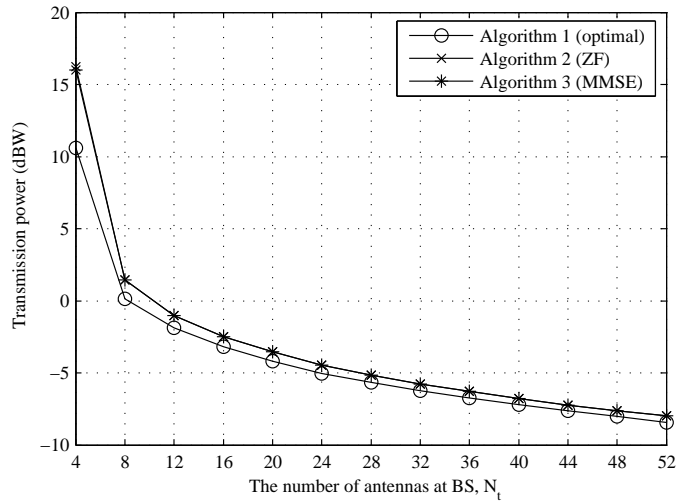}
\caption{Transmission power versus $N_t$ with fixed $e=-10$dBm and $\gamma=10$dB.}
\label{fig:fig4}
\end{figure}

At last, we investigate the impact of the number of transmit antennas, $N_t$, on the minimum transmission power for all proposed solutions, as shown in Fig. \ref{fig:fig4}, with fixed $\gamma=10$dB and $e=-10$dBm. It is observed that when the number of transmit antennas increases, the BS transmission power is substantially decreased for all solutions. This demonstrates the significant benefit by applying large or even massive antenna arrays for efficiently implementing MISO SWIPT systems in practice.

\section{Conclusion}
This paper has studied the joint transmit beamforming and receive power splitting design for a multiuser MISO
broadcast system for simultaneous wireless information and power transfer (SWIPT). The total transmission power at the BS is minimized subject to given SINR and harvested power constraints for MSs. The sufficient and necessary condition to guarantee the
feasibility of our problem is first derived. Then we solve this non-convex problem by applying the technique of SDR, and prove its optimality. Two suboptimal
designs of lower complexity than the optimal solution are also presented based on ZF and SINR-optimal beamforming, respectively,
and their performances are compared against the optimal
solution by simulations.

\appendices
\section{Proof of Lemma \ref{lem:no EH}}\label{appendix1}
First, it can be easily verified that if problem \eqref{eq:no EH} is not feasible, then problem \eqref{eq:P1} cannot be feasible since problem \eqref{eq:P1} has additional constraints on harvested power. Second, suppose problem \eqref{eq:no EH} is feasible, and let $\{\bv_k\}$ and $\{\rho_k\}$ be a feasible solution. It can be shown that the new solution $\{\alpha \bv_k\}$ and $\{\rho_k\}$ is also feasible to problem \eqref{eq:no EH}, $\forall \alpha>1$. Since there must exist a sufficiently large $\alpha>1$ such that the solution $\{\alpha \bv_k\}$ and $\{\rho_k\}$ also satisfies all the harvested power constraints of problem \eqref{eq:P1}, we can conclude that problem \eqref{eq:P1} is feasible. Lemma \ref{lem:no EH} is thus proved.

\section{Proof of Lemma \ref{lem:no rho}}\label{appendix2}
First, suppose problem \eqref{eq:no rho} is feasible and let $\{\bv_k\}$ denote a feasible solution. Then given any $0<\rho<1$, consider the following solution to problem \eqref{eq:no EH}: $\bar{\bv}_k=\bv_k/\sqrt{\rho}$, $\bar{\rho}_k=\rho$, $k=1,\cdots,K$. Since
\begin{align}
&\frac{\bar{\rho}_k|\bh_k^H\bar{\bv}_k|^2}{\sum_{j\neq k}\bar{\rho}_k|\bm{h}_k^H\bar{\bv}_j|^2+\bar{\rho}_k\sigma_k^2+\delta_k^2}\nonumber \\ =&\frac{|\bh_k^H\bv_k|^2}{\sum_{j\neq k}|\bm{h}_k^H\bv_j|^2+\rho \sigma_k^2+\delta_k^2}\nonumber \\ > &\frac{|\bh_k^H\bv_k|^2}{\sum_{j\neq k}|\bm{h}_k^H\bv_j|^2+ \sigma_k^2+\delta_k^2}\geq \gamma_k, ~~ \forall k,
\end{align}$\{\bar{\bv}_k\}$ and $\{\bar{\rho}_k\}$ is a feasible solution to problem \eqref{eq:no EH}. Therefore, if problem \eqref{eq:no rho} is feasible, then problem \eqref{eq:no EH} must be feasible.

Second, consider the case that problem \eqref{eq:no rho} is not feasible. In the following, we will show that problem \eqref{eq:no EH} cannot be feasible by contradiction. Suppose problem \eqref{eq:no EH} is feasible and let $\{\bv_k\}$ and $\{\rho_k\}$ be a feasible solution. Since $\rho_k< 1$, $\forall k$, we have
\begin{align}
\gamma_k \leq &\frac{\rho_k|\bh_k^H\bv_k|^2}{\sum_{j\neq k}\rho_k|\bm{h}_k^H\bv_j|^2+\rho_k\sigma_k^2+\delta_k^2}\nonumber \\ = & \frac{|\bh_k^H\bv_k|^2}{\sum_{j\neq k}|\bm{h}_k^H\bv_j|^2+\sigma_k^2+\frac{\delta_k^2}{\rho_k}} \nonumber \\ < & \frac{|\bh_k^H\bv_k|^2}{\sum_{j\neq k}|\bm{h}_k^H\bv_j|^2+\sigma_k^2+\delta_k^2}, ~~ \forall k.
\end{align}
As a result, $\{\bv_k\}$ is also a feasible solution to problem \eqref{eq:no rho}, which contradicts to the assumption that problem \eqref{eq:no rho} is not feasible.

To summarize, Lemma \ref{lem:no rho} is thus proved.


\section{Proof of Proposition \ref{pro:rank one}}\label{appendix4}
First, we show the first part of Proposition \ref{pro:rank one} as follows. Since problem \eqref{eq:P3} is convex and satisfies the Slater's condition, its duality gap is zero \cite{cvx_book}. Let $\{\lambda_k\}$ and $\{\mu_k\}$ denote the dual variables associated with the SINR constraints and harvested power constraints of problem \eqref{eq:P3}, respectively. Then \textcolor{blue}{the partial Lagrangian of problem \eqref{eq:P3} is defined as
\begin{align}
L(\{\bX_k,\rho_k,\lambda_k,\mu_k\})&\triangleq\sum_{k=1}^K \trace(\bX_k)-\sum_{k=1}^K \lambda_k\left(\frac{1}{\gamma_k}\bh_k^H\bX_k\bh_k-\sum_{j\neq k}\bh_k^H\bX_j\bh_k-\sigma_k^2-\frac{\delta_k^2}{\rho_k}\right)\nonumber\\
&-\sum_{k=1}^K \mu_k\left(\sum_{j=1}^K \bh_k^H\bX_j\bh_k-\frac{e_k}{\zeta_k(1-\rho_k)}-\sigma_k^2\right).\nonumber
\end{align}
With the Lagrangian function, the dual function of problem \eqref{eq:P3} is given by \cite[Sec. 5.7.3]{cvx_book}
\begin{align}
&\min_{\bX_k\succeq 0, 0<\rho_k<1, \forall k} L(\{\bX_k,\rho_k,\lambda_k,\mu_k\})\nonumber
\end{align}
which can be explicitly written as
\begin{align}\label{eq:Lagrangian}
&\min_{\bX_k\succeq 0, 0<\rho_k<1, \forall k} \left\{\sum_{k=1}^K\trace(\bA_k\bX_k)+\sum_{k=1}^K(\lambda_k-\mu_k)\sigma_k^2\right.\nonumber \\ &\left.+\sum_{k=1}^K\left(\frac{\lambda_k\delta_k^2}{\rho_k}+\frac{\mu_k e_k}{\zeta_k(1-\rho_k)}\right)\right\}
\end{align}}
where
\begin{equation}\label{eq:A}
\bA_k=\bI_{N_t}+\sum_{j=1}^K(\lambda_j-\mu_j)\bh_j\bh_j^H-\left(\frac{\lambda_k}{\gamma_k}+\lambda_k\right)\bh_k\bh_k^H, ~~ \forall k.
\end{equation}

Let $\{\lambda_k^\ast\}$ and $\{\mu_k^\ast\}$ denote the optimal dual solution to problem \eqref{eq:P3}. \textcolor{blue}{Accordingly, we define \begin{align}\bA_k^\ast=\bI_{N_t}+\sum\limits_{j=1}^K(\lambda_j^\ast-\mu_j^\ast)\bh_j\bh_j^H-\left(\frac{\lambda_k^\ast}{\gamma_k}+\lambda_k^\ast\right)\bh_k\bh_k^H.\end{align}
Then it is observed from \eqref{eq:Lagrangian} that, for any given $k$, $\bX_k^*$ must be a solution to the following problem:
\begin{equation}\label{eq:prob25}
\min_{\bX_k\succeq 0}\trace(\bA_k^\ast \bX_k).
\end{equation}
Note that, to guarantee a bounded dual optimal value,  we must have $$\bA^\ast_k\succeq 0, ~k=1,2,\ldots, K.$$ As a result, the optimal value of problem \eqref{eq:prob25} is zero, i.e., $\trace(\bA_k^\ast\bX_k^\ast)=0$, $k=1,2,\ldots, K$, which together with $\bA_k^\ast\succeq 0$ and $\bX_k^\ast\succeq 0$, $k=1,2,\ldots, K$, implies that
\begin{align}\label{eqn:kkt condition}
\bA_k^\ast \bX_k^\ast=\mathbf{0}, ~~ k=1,\cdots,K.
\end{align}
Moreover,} it is observed from \eqref{eq:Lagrangian} that
the optimal PS solution $\rho_k^\ast$ for any given $k\in \{1,\cdots,K\}$ must be a solution of the following problem:
\begin{equation}
\label{eq:rho}
\begin{split}
\textcolor{blue}{\min_{\rho_k}}  & ~~ \frac{\lambda_k^\ast\delta_k^2}{\rho_k}+\frac{\mu_k^\ast e_k}{\zeta_k(1-\rho_k)} \\
\st & ~~ 0< \rho_k< 1.
\end{split}
\end{equation}It is observed from the above problem that if $\lambda_k^\ast=0$ and $\mu_k^\ast>0$, the optimal solution will be $\rho_k^\ast \rightarrow 0$. Similarly, if $\mu_k^\ast=0$ and $\lambda_k^\ast>0$, then the optimal solution is $\rho_k^\ast \rightarrow 1$. Since given $e_k>0$ and $\gamma_k>0$, $\forall k$, $0<\rho_k^\ast < 1$ must hold for all $k$'s in problem \eqref{eq:P3}, the above two cases cannot happen.

Next, we show that $\lambda_k^\ast=0$ and $\mu_k^\ast=0$ cannot be true for any $k$ by contradiction. Suppose there exist some $k$'s such that $\lambda_k^\ast=\mu_k^\ast=0$. Define a set
\begin{equation}
\Psi\triangleq\{k|\lambda_k^\ast=0,\mu_k^\ast=0,1\leq k\leq K\},
\end{equation}
where $\Psi\neq \emptyset$. 
Define $\bB^\ast=\bI_{N_t}+\sum\limits_{j\notin \Psi}(\lambda_j^\ast-\mu_j^\ast)\bh_j\bh_j^H$.
Then $\bA_k^\ast$ can be expressed as
\begin{align}\label{eqn:impo}
\bA_k^\ast=\left\{\begin{array}{ll}\bB^\ast, & {\rm if} ~ k\in \Psi, \\ \bB^\ast-\left(\frac{\lambda_k^\ast}{\gamma_k}+\lambda_k^\ast\right)\bh_k\bh_k^H, & {\rm otherwise}.\end{array}\right.
\end{align}
Since $\bA_k^\ast\succeq \mathbf{0}$ and $-\left(\frac{\lambda_k^\ast}{\gamma_k}+\lambda_k^\ast\right)\bh_k\bh_k^H\preceq \mathbf{0}$, it follows that $\bB^\ast \succeq \mathbf{0}$. In the following, we show that $\bB^\ast\succ \mathbf{0}$ by contradiction. Suppose the minimum eigenvalue of $\bB^\ast$ is zero. Then, there exists at least an $\bx\neq \mathbf{0}$ such that $\bx^H\bB^\ast\bx=0$. According to (\ref{eqn:impo}), it follows that
\begin{align}\label{eq:eq30}
\bx^H\bA_k^\ast\bx=-\left(\frac{\lambda_k^\ast}{\gamma_k}+\lambda_k^\ast\right)\bx^H\bh_k\bh_k^H \bx\geq 0, ~~ k\notin \Psi.
\end{align}
Note that we have $\lambda^\ast_k>0$ if $k\notin \Psi$. \textcolor{blue}{Hence, we obtain from \eqref{eq:eq30} that $\vert\bh_k^H \bx\vert^2\leq 0$, $k\notin \Psi$}. It thus follows that
\begin{align}
\bh_k^H\bx=0, ~~ k\notin \Psi.
\end{align}Thus, we have
\begin{align}
\bx^H\bB^\ast\bx=\bx^H\left(\bI_{N_t}+\sum\limits_{j\notin \Psi}(\lambda_j^\ast-\mu_j^\ast)\bh_j\bh_j^H\right)\bx=\bx^H\bx>0,
\end{align}which contradicts to $\bx^H\bB^\ast\bx=0$. Thus, we have $\bB^\ast\succ \mathbf{0}$, i.e., $\rank(\bB^\ast)=N_t$. It thus follows from (\ref{eqn:impo}) that $\rank(\bA_k^\ast)=N_t$ if $k\in \Psi$. According to \eqref{eqn:kkt condition}, we have $\bX_k^\ast=\mathbf{0}$ if $k\in \Psi$. However, it is easily verified that $\bX_k^\ast=\mathbf{0}$ cannot be optimal for problem \eqref{eq:P3}. Therefore, it must follow that $\Psi=\emptyset$, i.e., $\lambda_k=0$ and $\mu_k=0$ cannot be true for any $k$. Since we have previously also shown that both the cases of $\lambda_k^\ast=0$, $\mu_k^\ast>0$ and $\lambda_k^\ast>0$, $\mu_k^\ast=0$ cannot be true for any $k$, it follows that $\lambda_k^\ast>0$, $\mu_k^\ast>0$, $\forall k$. According to the complementary slackness \cite{cvx_book}, we thus prove the first part of \textcolor{blue}{Proposition \ref{pro:rank one}}.

Next, we prove the second part of Proposition \ref{pro:rank one}. Since $\Psi=\emptyset$, it follows that $\bB^\ast=\bI_{N_t}+\sum\limits_{j=1}^K(\lambda_j^\ast-\mu_j^\ast)\bh_j\bh_j^H$, and \eqref{eqn:impo} thus reduces to \begin{align}
\bA_k^\ast=\bB^\ast-\left(\frac{\lambda_k^\ast}{\gamma_k}+\lambda_k^\ast\right)\bh_k\bh_k^H, ~~ k=1,\cdots,K.
\end{align}Since we have shown that $\rank(\bB^\ast)=N_t$, it follows that ${\rm rank}(\bA_k^\ast)\geq N_t-1$, $k=1,\cdots,K$. Note that if $\bA_k^\ast$ is of full rank, we have $\bX^\ast=\mathbf{0}$, which cannot be the optimal solution to problem \eqref{eq:P3}. As a result, it follows that ${\rm rank}(\bA_k^\ast)=N_t-1$, $\forall k$. According to \eqref{eqn:kkt condition}, we have ${\rm rank}(\bX_k^\ast)=1$, $k=1,\cdots,K$. The second part of Proposition \ref{pro:rank one} is thus proved. Combining the proofs of both parts, the proof of \textcolor{blue}{Proposition \ref{pro:rank one}} is thus completed.

\section{Proof of Proposition \ref{propZF}}
\label{appendix5}

Note that with ZF transmit beamforming, the SINR and harvested power constraints in problem \eqref{eq:P1ZF} can be decoupled over $k$. Moreover, it is observed that the objective function of problem \eqref{eq:P1ZF} is also separable over $k$. Hence, problem \eqref{eq:P1ZF} can be decomposed into $K$ subproblems, with the $k$-th subproblem, $k=1,\cdots,K$, expressed as
\begin{equation}
\label{eq:P1ZF2}
\begin{split}
\min_{\bv_k, \rho_k} & ~~ ||\bv_k||^2\\
\st & ~~ \frac{\rho_k|\bh_k^H\bv_k|^2}{\rho_k\sigma_k^2+\delta_k^2}{\geq} \gamma_k, \\
&~~\zeta_k(1-\rho_k)\left(|\bh_k^H\bv_k|^2+\sigma_k^2\right)\geq e_k, \\
&~~\bH_k^H\bv_k=0, \\
&~~0<\rho_k< 1.
\end{split}
\end{equation}

Next, we show that for problem \eqref{eq:P1ZF2}, with the optimal ZF beamforming solution $\bar{\bv}_k^\ast$ and PS solution $\bar{\rho}_k^\ast$, the SINR constraint and harvested power constraint should both hold with equality, by contradiction. First, suppose that both the two SINR and harvested power constraints are not tight given $\bar{\bv}_k^\ast$ and $\bar{\rho}_k^\ast$. In this case, there must exist an $\alpha_k$,  $0<\alpha_k<1$, such that with the new solution $\bv_k=\alpha_k\bar{\bv}_k^\ast$ and $\rho_k=\bar{\rho}_k^\ast$, either the SINR or harvested power constraint is tight. Moreover, with this new solution, the transmission power is reduced, which contradicts the fact that $\bar{\bv}_k^\ast$ and $\bar{\rho}_k^\ast$ is optimal for problem \eqref{eq:P1ZF2}. Thus, the case that both the SINR and harvested power constraints are not tight cannot be true. Next, consider the case when the SINR constraint is tight but the harvested power constraint is not tight. In this case, we can increase the value of $\bar{\rho}_k^\ast$ by a sufficiently small amount such that both the SINR and harvested power constraints become non-tight. Similar to the argument in the first case, we can conclude that this case cannot be true, too. Similarly, it can be shown that the case that the harvested power constraint is tight but the SINR constraint is not tight, also cannot be true. To summarize, with the optimal solution $\bar{\bv}_k^\ast$ and $\bar{\rho}_k^\ast$ for problem \eqref{eq:P1ZF2}, the SINR and harvested power constraints must both hold with equality. Hence, problem \eqref{eq:P1ZF2} is equivalent to
\begin{equation}
\label{eq:P1ZF2-eqv}
\begin{split}
\min_{\bv_k, \rho_k} & ~~ ||\bv_k||^2\\
\st & ~~ \frac{\rho_k|\bh_k^H\bv_k|^2}{\rho_k\sigma_k^2+\delta_k^2}= \gamma_k, \\
&~~\zeta_k(1-\rho_k)\left(|\bh_k^H\bv_k|^2+\sigma_k^2\right)= e_k, \\
&~~\bH_k^H\bv_k=0,\\
&~~0<\rho_k< 1.
\end{split}
\end{equation}

Note that in problem \eqref{eq:P1ZF2-eqv}, the first two equality constraints yield
\begin{equation}\label{eq:rho_equation}
\gamma_k\left(\sigma_k^2+\frac{\delta_k^2}{\rho_k}\right)=\frac{e_k}{\zeta_k(1-\rho_k)}-\sigma_k^2.
\end{equation}
We can rearrange \eqref{eq:rho_equation} as
$$\frac{\alpha_k}{1-\rho_k}-\frac{\beta_k}{\rho_k}=1,$$
which can be shown to have a unique solution satisfying $0<\rho_k<1$ given by
$$\bar{\rho}_k^*=\frac{-(\alpha_k+\beta_k-1)+\sqrt{(\alpha_k+\beta_k-1)^2+4\beta_k}}{2}.$$
Define $\bv_k=\sqrt{p_k}\barv_k$ with $||\barv_k||=1$, $\forall k$. Then problem \eqref{eq:P1ZF2-eqv} is equivalent to the following problem.
\begin{equation}
\label{eq:P1ZF2-eqv2}
\begin{split}
\min_{p_k,\barv_k} & ~~ p_k\\
\st & ~~ p_k|\bh_k^H\barv_k|^2= \tau_k, \\
&~~\bH_k^H\barv_k=0,\\
&~~||\barv_k||=1,
\end{split}
\end{equation}
where $\tau_k\triangleq\gamma_k\left(\sigma_k^2+\frac{\delta_k^2}{\bar{\rho}_k^*}\right)$.
It can be observed from the first constraint of problem \eqref{eq:P1ZF2-eqv2} that to achieve the minimum $p_k$, the optimal $\barv_k$ should be the optimal solution to the following problem:
\begin{equation}
\begin{split}
\max_{\barv_k} & ~~ |\bh_k^H\barv_k|^2\\
\st & ~~ \bH_k^H\barv_k=0, \\
& ~~ ||\barv_k||=1.
\end{split}
\end{equation}It can be shown that the unique (up to phase rotation) optimal solution to the above problem is given by
$$\barv_k=\frac{\bU_k\bU_k^H\bh_k}{||\bU_k\bU_k^H\bh_k||},$$where $\bU_k$ denotes the orthogonal basis for the null space of $\bH_k^H$.
Hence, the optimal power solution is given by
$$p_k=\frac{\tau_k}{|\bh_k^H\barv_k|^2}=\frac{\tau_k}{||\bU_k\bU_k^H\bh_k||^2}.$$
It thus follows that the optimal $\bar{\bv}_k^\ast$ for problem \eqref{eq:P1ZF2-eqv} is given by
$$\bar{\bv}_k^\ast=\sqrt{\gamma_k\left(\sigma_k^2+\frac{\delta_k^2}{\bar{\rho}_k^*}\right)}\frac{\bU_k\bU_k^H\bh_k}{||\bU_k\bU_k^H\bh_k||^2}.$$
Proposition \ref{propZF} is thus proved.


\section{Proof of Proposition \ref{prop:subopt_alg2}}
\label{sec:subopt_alg2}
First, we show the first part of Proposition \ref{prop:subopt_alg2} as follows.
Define
$$\phi_{I}(\rho_k, \alpha)\triangleq\frac{\rho_k\alpha|\bh_k^H\hat{\bv}_k|^2}{\sum_{j\neq k}\rho_k\alpha|\bm{h}_k^H\hat{\bv}_j|^2+\rho_k\sigma_k^2+\delta_k^2},$$ and
$$\phi_{E}(\rho_k, \alpha)\triangleq\zeta_k(1-\rho_k)(\alpha\sum_{j=1}^K |\bh_k^H\hat{\bv}_j|^2+\sigma_k^2).$$
Based on Lemmas \textcolor{blue}{3.1 and 3.2}, we can infer that problem \eqref{eq:power_min_SINR_only} is feasible if and only if problem \eqref{eq:P1} is feasible. Since $\{\hat{\bv}_k\}$ is the optimal solution to problem \eqref{eq:power_min_SINR_only}, it follows that,
\begin{equation}\label{eq:SINR_alpha}
\phi_I(1, 1)= \gamma_k,\forall k,
\end{equation}with $\rho_k=1$, $\forall k$, and $\alpha=1$. Since $\phi_{I}(\rho_k, \alpha)$ is a monotonically increasing function of $\alpha$ when $\alpha>0$, we can have an $\hat{\alpha}>1$ such that
\begin{equation}\label{eq:SINR_alpha1}
\phi_{I}(1, \hat{\alpha})> \gamma_k,\forall k.
\end{equation}
Since $\phi_{I}(\rho_k, \alpha)$ is a continuous function of $\rho_k$, we can find an $\hat{\rho}_k<1$ such that
\begin{equation}\label{eq:SINR_alpha2}
\phi_{I}(\hat{\rho}_k, \hat{\alpha})> \gamma_k,\forall k.
\end{equation}
Since $1-\hat{\rho}_k>0$, $\forall k$, $\phi_E(\hat{\rho}_k, \alpha)$ is increasing over $\alpha>0$. Together with the fact that $\phi_I(\hat{\rho}_k, \alpha)$ is also increasing over $\alpha>0$, $\forall k$, it follows that there must exist a sufficiently large $\hat{\alpha}$ such that
\begin{equation}
\phi_{I}(\hat{\rho}_k, \hat{\alpha})\geq\gamma_k,~\phi_{E}(\hat{\rho}_k, \hat{\alpha})\geq e_k,\forall k.
\end{equation}Therefore, we conclude that the ``if'' part is proved, i.e., problem \eqref{eq:P1-subopt2} is feasible if problem \eqref{eq:power_min_SINR_only} or \eqref{eq:P1} is feasible. The ``only if'' part can be shown easily since any feasible solution of problem \eqref{eq:P1-subopt2} must be feasible for problem \eqref{eq:power_min_SINR_only} or \eqref{eq:P1}. The first part of Proposition \ref{prop:subopt_alg2} is thus proved.

Next, we prove the second part of Proposition \ref{prop:subopt_alg2}.
With the defined $c_k$'s and $d_k$'s given in Proposition \ref{prop:subopt_alg2} and by noting that $\alpha c_k-\sigma_k^2\geq\delta_k^2>0$, $\forall k$, when $\alpha\geq1$, problem \eqref{eq:P1-subopt2} can be equivalently rewritten as
\begin{equation}
\label{eq:P1-subopt21}
\begin{split}
&\min_{\alpha, \{\rho_k\}}  ~~ \alpha\\
& \st ~~ \rho_k\geq \frac{\delta_k^2}{\alpha c_k-\sigma_k^2}, ~~ \forall k,\\
&~~1-\rho_k\geq \frac{e_k}{\zeta_k(\alpha d_k+\sigma_k^2)}, ~~ \forall k, \\
&~~0< \rho_k< 1, ~~ \forall k,\\
&~~\alpha>1.
\end{split}
\end{equation}
It can be shown that problem \eqref{eq:P1-subopt21} is equivalent to the following problem:
\begin{equation}
\label{eq:P1-subopt23}
\begin{split}
\min_{\alpha>1}  & ~~ \alpha\\
\st & ~~ g_k(\alpha)\leq 1, ~~ \forall k,
\end{split}
\end{equation}
where $g_k(\alpha)\triangleq\frac{\delta_k^2}{\alpha c_k-\sigma_k^2}+\frac{e_k}{\zeta_k(\alpha d_k+\sigma_k^2)}$, since the two problems have the same optimal value.

It is observed that $g_k(\alpha)=1$ is a quadratic equation. Let $\underline{\alpha}_k$ and $\bar{\alpha}_k$, where $\bar{\alpha}_k\geq\underline{\alpha}_k$, be the two roots of the equation $g_k(\alpha)=1$.
By noting that $\frac{\delta_k^2}{c_k-\sigma_k^2}=1$ due to \eqref{eq:SINR_alpha}, we conclude that $g_k(1)>1$, implying $\bar{\alpha}_k>\underline{\alpha}_k$ (otherwise any $\alpha$ satisfies $g_k(\alpha)\leq 1$). On the other hand, due to the fact that $c_kd_k\zeta_k>0$, the inequality $g_k(\alpha)\leq 1$ implies either $\alpha\geq \bar{\alpha}_k$ or $\alpha\leq\underline{\alpha}_k$. Since $g_k(\alpha)$ is a monotonically decreasing function of $\alpha$ for $\alpha> 1$ and $g_k(1)>1$, we must have $\underline{\alpha}_k<1<\bar{\alpha}_k$. It thus follows that problem \eqref{eq:P1-subopt22} is simplified as
\begin{equation}
\label{eq:P1-subopt22}
\begin{split}
\min_{\alpha}  & ~~ \alpha\\
\st & ~~ \alpha\geq \bar{\alpha}_k, ~~ \forall k,
\end{split}
\end{equation}
which has the optimal solution given by $\tilde{\alpha}^*=\max_{1\leq k\leq K} \bar{\alpha}_k$. Furthermore, we can easily check that
$\tilde{\rho}_k^*=\frac{\delta_k^2}{\tilde{\alpha}^* c_k-\sigma_k^2}$, $\forall k$, is the corresponding optimal solution to problem \eqref{eq:P1-subopt21} with given $\tilde{\alpha}^\ast$. This thus completes the proof of the second part of Proposition \ref{prop:subopt_alg2}. By combining the proofs of both parts, the proof of Proposition \ref{prop:subopt_alg2} is thus completed.

\textcolor{blue}{\section{Proof of Proposition \ref{prop:asym_opt}}}
\label{sec:proof_asym_opt}
For each $k$, let $\bv_k=\sqrt{p_k}\bar{\bv}_k$ where $p_k$ denotes the power allocation and $\bar{\bv}_k$ denotes the normalized beamformer with unit norm, i.e., the beam direction. Then the SINR constraints can be expressed as
\begin{equation}\label{prop:SINR_expression}
\frac{\rho_kp_k\vert\bh_k^H\bar{\bv}_k\vert^2}{\sum_{j\neq k} \rho_kp_j\vert\bh_k^H\bar{\bv}_j\vert^2+\rho_k\sigma_k^2+\delta_k^2}\geq \gamma_k, \forall k.
\end{equation}
We first prove a basic result that, for any set of beamforming vectors $\{\bv_k\}$ that satisfies the SINR constraints with $\gamma_k\rightarrow\infty$, $\forall k$,  $p_k$'s must go to infinity and furthermore $\bar{\bv}_k$'s must satisfy zero-forcing condition, i.e, $\vert\bh_k^H\bar{\bv}_j\vert\rightarrow 0$, $\forall j\neq k$.

Due to the fact that $\sum_{j\neq k} \rho_kp_j\vert\bh_k^H\bar{\bv}_j\vert^2\geq 0$, $\forall k$, the SINR constraints imply
$$\frac{\rho_kp_k\vert\bh_k^H\bar{\bv}_k\vert^2}{\rho_k\sigma_k^2+\delta_k^2}\geq \gamma_k, \forall k.$$ By noting that $\vert\bh_k\bar{\bv}_k\vert^2$ is bounded, we thus infer from the above inequalities that,  for any $k$, $p_k\rightarrow \infty$ when $\gamma_k\rightarrow \infty$. Next we show by contradiction that zero-forcing condition must be satisfied for $\{\bv_{k}\}$ when $\gamma_k\rightarrow\infty$, $\forall k$. Suppose we have $\bh_k^H\bar{\bv}_j\neq 0$ for any $j\neq k$ when $\gamma_k\rightarrow\infty$, $\forall k$. Note that all the terms $\vert\bh_k^H\bv_j\vert^2$ ($\forall k,j$) are bounded.  The coupling of $p_k$'s in all the SINR constraints \eqref{prop:SINR_expression} implies that $p_k$'s need to be in the same order. Combining this with the fact that for all $k$, $p_k\rightarrow \infty$ as $\gamma_k\rightarrow\infty$, we can assume that without loss of optimality $p_k=a_kp$, $\forall k$, with $a_k$'s being constant and $p\rightarrow\infty$ (when $\gamma_k\rightarrow \infty$). It thus follows from \eqref{prop:SINR_expression} that
\begin{equation}
\frac{\rho_ka_k\vert\bh_k^H\bar{\bv}_k\vert^2}{\sum_{j\neq k} \rho_k a_j\vert\bh_k^H\bar{\bv}_j\vert^2}\geq \infty, \forall k,
\end{equation}
implying a contradiction. To summarize, for any $\bv_k$'s that satisfy the SINR constraints with $\gamma_k\rightarrow \infty$, $\forall k$, we have $p_k\rightarrow \infty$ and $\vert\bh_k^H\bar{\bv}_j\vert\rightarrow 0$, $\forall j\neq k$, $k=1,2,\ldots, K$.

From the above result, we conclude that, for all the three solutions provided by Algorithms 1, 2, and 3, the power allocation will become infinity as $\gamma_k$'s increase to infinity. Moreover, the optimal solution and the SINR-optimal beamforming based suboptimal solution must satisfy zero-forcing condition (as the ZF-based suboptimal solution does) when $\gamma_k$'s go to infinity. Hence, the three algorithms will asymptotically achieve the same minimum transmit power when $\gamma_k\rightarrow \infty$, $\forall k$. Thus, the proof of Proposition \ref{prop:asym_opt} is completed.

\end{document}